\numberwithin{equation}{section}
\newtheorem{mydef}{Definition}
\newtheorem{mythe}{Theorem}[section]
\begin{document}

\title{An analogy between the thermal equilibration of a gas mixture and transverse relaxation in magnetic resonance spectroscopy}
\author{\\ \\ \\ Daniel M Packwood\\ \\ Department of Chemistry \\ Graduate School of Science\\ Kyoto University, Kyoto, 606-8502, Japan \\ packwood@kuchem.kyoto-u.ac.jp}
\maketitle
\newpage

\begin{abstract}

We study a gas containing two components, a small component $P$ and a large component $Q$. $P$ is selectively heated to a high temperature and then returns to equilibrium \textit{via} collisions with $Q$. This thermal equilibration process is analysed in a new way. We divide the kinetic energy space of the molecules of $P$ into two regions $F$ and $D$, and show that the molecules of $P$ randomly switch (`oscillate') between the two states as time proceeds due to collisions with the molecules of $Q$. Initially, the molecules of $P$ are all in the state $D$, however because each molecule in $P$ collides with the molecules of $Q$ at different times, the oscillations occur out of step with each other. There is a net destructive interference between the oscillations, and so they are not observed when monitoring the average kinetic energe of the molecules of $P$ as a function of time. We will explain the similarities and differences between this observation and transverse relaxation processes that occur in magnetic resonance spectroscopy. This study employs a stochastic model of elastic collisions between the molecules of $P$ and $Q$, and for completeness we examine its relationship with the two major models of thermal equilibration in statistical physics, namely the Boltzmann equation and the Ornstein-Uhlenbeck process.  

\noindent
PACS numbers: 05.40.-a, 51.10+y

\end{abstract}

\section{Introduction}

Consider a gas with two components $P$ and $Q$, and suppose that $Q$ has contains considerably more molecules than $P$. If $P$ is selectively heated to a high temperature, then collisions between the molecules of $P$ and $Q$ will return $P$ to its equilibrium state. We will call this process thermal equilibration. The starting point for studying thermal equilibration theoretically is the Boltzmann equation. This has been the the subject of considerable research for many years and has produced a vast literature. See [1] and [2] for recent reviews on the topic and a large list of references. Research over the last decade has focused mainly on the rate of approach of the Boltzmann equation to equilibrium, and various bounds have been deduced for spatially homogeneous [3, 4] and inhomogeneous [5, 6, 7] systems. An alternative approach is provided by stochastic models of Brownian motion, usually the Ornstein-Uhlenbeck process and other diffusion processes. Stochastic models of the Brownian motion have also spawned an enormous literature (see, for example, [8] and [9]). The advantage of stochastic models of the Brownian motion over the Boltzmann equation in modelling thermal equilibration is that they can be solved relatively easily. However, this is matched by an important disadvantage, namely that they provide relative little information on the microscopic collision dynamics. Recent studies on this topic have therefore looked at the connection between diffusion processes and more detailed collision models in certain limits. These include Lorentzian gas models [10, 11], heat bath models [12], and models of systems interacting with sequences of classical [13] and quantum systems [14, 15]. Through studies involving the Boltzmann equation and stochastic models, a relatively detailed picture of ther thermal equilibration process is beginning to emerge.

This paper will provide a further insight into the thermal equilibration process. We will show how thermal equilibration can be interpreted in a similar way to an apparently very different non-equilibrium process. Namely, transverse relaxation in a magnetic resonance experiment (also known as $T_2$ relaxation and spin-spin relaxation). Figure 1 illustrates transverse relaxation for a group of spins embedded in a crystal. Omitting several details, the magnetic resonance experiment involves aligining the spins vectors in the $xy$-plane with an electromagnetic pulse at time 0. Following this, the spin vectors rotate in the $xy$-plane together and in near unison. However, random interactions with the environment surrounding each spin causes the angular frequencies of the spins to fluctuate with time. This causes the spin vectors to fall out of alignment with each other. If we plot the $x$ or $y$ component of the spins as a function of time, we therefore see a series of oscillations that gradually fall out of step with each other (Figure 2) [16, 17, 18, 19, 20, 21, 22]. Transverse relaxation can monitored in the laboratory by observing the decay of the net electrical current induced by the rotating spins with time. This decay can be thought of as arising from destructive interference between the oscillations associated with each spin, like those shown in Figure 2. The analogy between thermal equilibration and transverse relaxation is constructed as follows. The kinetic energy space of the molecules in $P$ is divided into a `low energy' region and a `high energy region'. Initially, the molecules in $P$ all start in the same region of the kinetic energy space (Figure 3). However, as time proceeds the molecules of $P$ switch (`oscillate') between the two states due to collisions with the molecules of $Q$ (see Figure 4). Because each molecule in $P$ collides with molecules of $Q$ at different times, the oscillation associated with each molecule falls out of phase with one another. There is a net destructive interference between the oscillations, and therefore the individual oscillations are not seen when the thermal equilibration process is monitored through the average kinetic energy of $P$. The `oscillations' that are described here are not true oscillations because they only involve shifting between two states, rather than a continuous spectrum of states like we have with the $x$ or $y$ component of a spin. However, the terminology is useful for describing the analogy with the transverse relaxation problem. The model that we will use involves randomly occurring elastic collisions between the molecules of $P$ and $Q$, and falls somewhere between the Boltzmann equation and the Ornstein-Uhlenbeck equation in terms of its physical detail. For completeness we will therefore show how the model is related to the Boltzmann and Ornstein-Uhlenbeck equation.

Section 2 describes the model and its basic properties and shows how it is related to the Boltzmann equation and the Ornstein-Uhlenbeck process. Section 3 then describes the analogy between transverse relaxation and thermal equilibration in detail.

\section{Model description}

Consider a one-dimensional, spatially homogeneous gas with two components, a small component $P$ and relatively large component $Q$. The molecules in $P$ and $Q$ have masses $m_p$ and $m_q$, respectively, with $m_p \geq m_q$. Suppose that at time 0 the molecules in $P$ are instantaneously brought out of thermal equilibrium by an outside influence (e.g., a laser pulse). Consider an individual molecule $P_0 \in P$. From time 0, the first particle from $Q$ that $P_0$ collides with is denoted by $Q_1$, the second particle by $Q_2$, and so on. We assume elastic collisions between the molecules of $Q$ and $P$, i.e., upon colliding with $Q_n$ the velocity of $P_0$ becomes

\begin{equation}
V_n = c V_{n-1} + X_n,
\end{equation}

\noindent
where $X_n$ is the velocity of $Q_n$ and

\[
c = \frac{m_p - m_q}{m_p + m_q}.
\]

\noindent
By induction, (2.1) is 

\begin{equation}
V_n = c^n V_0 + \sum_{k=1}^{n} c^{n-k} X_k.
\end{equation}

\noindent
We will assume that $X_1, X_2, \ldots$ are independent $\mathcal{N}(0,\sigma_x^2)$ random variables, and that $V_0 \sim \mathcal{N}(0,\sigma_0^2)$ and is independent of $X_1, X_2, \ldots$. These assumptions are standard assumptions in gas-gas collision models. We will also ignore collisions between molecules in $P$. 

\begin{mydef}
The stochastic process $V^C = \{V_n\}_{n \in \{0, 1, \ldots\}}$, where each $V_n \in V^C$ is given by (2.2), is called the \textit{collision velocity process}. $V^C$ is defined with respect to the probability space $\mathcal{P} = (\Omega, \mathcal{F}, P)$.
\end{mydef}

In the laboratory, equilibration is measured with respect to time, rather than the number of collisions that have occurred. Let $U_1$ be the time of collision between $P_0$ and $Q_1$, $U_2 - U_1$ the length of time between colliding with $Q_1$ and $Q_2$, and so on. We will assume that $U_1, U_2 - U_1 \ldots$ are independent exponential random variables, i.e., for all $k$,

\begin{equation}
P(U_k - U_{k-1} < u) = 1 - e^{-\lambda u}, 
\end{equation}

\noindent
where $\lambda$ is the average frequency of collisions, and that $U_1, U_2 - U_1 \ldots$ are independent of $X_1, X_2, \ldots$ and $V_0$. This means that the number of collisions that $P_0$ has experienced by time $t$ is the value of a Poisson process $N$ at time $t$. This process is also independent of $X_1, X_2, \ldots$ and $V_0$. Thus, the velocity of the particle at time $t$ is simply

\begin{equation}
V(t) = c^{N(t)} V_0 + \sum_{k=1}^{N(t)} c^{N(t)-k} X_k. 
\end{equation}

\begin{mydef}
The stochastic process $V^T = \{V(t)\}_{t \in \mathbb{R}_+}$, where each $V(t) \in V^T$ is given by (2.4), is called the \textit{time velocity process}. $V^T$ is also defined with respect to $\mathcal{P}$. 
\end{mydef}

\noindent
The time parameter of the time-dependent stochastic processes is written inside of the parenthesis to distinguish the time velocity process from the collision velocity process. 

The advantage of defining the `collision velocity process' and the `time velocity process' separately is that relatively difficult calculations on the time velocity process can instead be performed on the collision velocity process. For example, for the collision velocity process we have 

\begin{equation}
E(V_n) = 0,
\end{equation}

\begin{equation}
\mbox{var}(V_n) = c^{2n}\sigma_0^2 + \sigma_x^2 \left(\frac{1 - c^{2n}}{1 - c^2} \right).
\end{equation}

\noindent
for all $n$. To prove the second result, note that $V_n$ is a weighted sum of independent normal random variables, and so

\[
\mbox{var}(V_n) = c^{2n}\sigma_0^2 + \sum_{k=1}^n c^{2(n-k)}\sigma_x^2 = c^{2n}\sigma_0^2 + \sigma_x^2 \sum_{k=0}^{n-1} c^{2k}. 
\]

\noindent
Because $E(E(V(t)^2 \mid N(t))) = E(V_{N(t)}^2)$, we can work out the variance of the time velocity process at time $t$ by conditioning arguments. We find that

\begin{equation}
\mbox{var}(V(t)) = \sigma_0^2 e^{-\lambda t(1 - c^2)} + \frac{\sigma^2}{1 - c^2}\left(1 - e^{-\lambda t (1 - c^2)} \right).
\end{equation}

\noindent
A good definition of an `equilibrium distribution' is also needed to discuss the equilibration process. The following is satisfactory.

\begin{mydef}
The collision velocity process and time velocity process are said to have an \textit{equilibrium distribution} if for all $v \in \mathbb{R}$, $\lim_{n \rightarrow \infty} P(V_n < v)$ and $\lim_{t \rightarrow \infty}P(V(t) < v)$ are well-defined probabilities. 
\end{mydef} 

\begin{mythe}
The collision velocity processes and time velocity processes each have an equilibrium distribution.
\end{mythe}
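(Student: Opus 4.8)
The plan is to use the fact that, for each fixed $n$, $V_n$ is an explicit mean-zero Gaussian, together with $|c|<1$ --- indeed $c\in[0,1)$ since $c=(m_p-m_q)/(m_p+m_q)$ with $m_p\ge m_q>0$ --- and then to reduce the time velocity process to the collision velocity process by conditioning on $N(t)$.

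For the collision velocity process: by (2.2), $V_n$ is a finite linear combination of the independent Gaussians $V_0,X_1,\dots,X_n$, hence $V_n$ is itself Gaussian, with mean $0$ and, by (2.5)--(2.6), variance $\sigma_n^2:=c^{2n}\sigma_0^2+\sigma_x^2(1-c^{2n})/(1-c^2)$. Let $G_\tau$ denote the distribution function of $\mathcal{N}(0,\tau)$ for $\tau>0$. Since $|c|<1$ we have $c^{2n}\to 0$, so $\sigma_n^2\to\sigma_\infty^2:=\sigma_x^2/(1-c^2)>0$, and therefore, for every $v\in\mathbb{R}$,
\[
\lim_{n\to\infty}P(V_n<v)=\lim_{n\to\infty}G_{\sigma_n^2}(v)=G_{\sigma_\infty^2}(v),
\]
by continuity of $\tau\mapsto G_\tau(v)$ on $(0,\infty)$. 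The right-hand side is a genuine Gaussian distribution function, hence a well-defined probability, so $V^C$ has an equilibrium distribution. No case split is needed: if $c=0$ then simply $\sigma_\infty^2=\sigma_x^2$.

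For the time velocity process: since the Poisson process $N$ is independent of $V_0,X_1,X_2,\dots$, the conditional law of $V(t)$ given $\{N(t)=n\}$ equals the law of $V_n$. Hence
\[
P(V(t)<v)=\sum_{n=0}^{\infty}P(N(t)=n)\,G_{\sigma_n^2}(v)=E\big[G_{\sigma_{N(t)}^2}(v)\big].
\]
As $t\to\infty$, $N(t)\to\infty$ almost surely, and since $G_{\sigma_n^2}(v)\to G_{\sigma_\infty^2}(v)$ we get $G_{\sigma_{N(t)}^2}(v)\to G_{\sigma_\infty^2}(v)$ almost surely; as these quantities lie in $[0,1]$, bounded convergence yields $\lim_{t\to\infty}P(V(t)<v)=G_{\sigma_\infty^2}(v)$, again a well-defined probability (consistent with letting $t\to\infty$ in (2.7)). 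Thus $V^T$ also has an equilibrium distribution.

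The argument is essentially routine; the only step demanding real care is the limit for $V^T$, i.e.\ interchanging $t\to\infty$ with the expectation over the random index $N(t)$. I expect this to be the main (mild) obstacle, and would handle it via the conditioning identity above followed by bounded convergence, as sketched; a direct tail estimate on the Poisson mixture would be an alternative. It is also worth noting explicitly that $\sigma_\infty^2>0$, so the common limiting law of both processes is a nondegenerate Gaussian, $\mathcal{N}(0,\sigma_x^2/(1-c^2))$, rather than a point mass at $0$.
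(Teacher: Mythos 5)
Your proposal is correct and follows essentially the same route as the paper: $V_n$ is an explicit mean-zero Gaussian whose variance converges to $\sigma_x^2/(1-c^2)$ since $|c|<1$, and the time velocity process is reduced to the collision velocity process by conditioning on $N(t)$. Your justification of the $t\to\infty$ interchange (via $N(t)\to\infty$ a.s.\ and bounded convergence) is a cleaner rendering of the step the paper handles informally through $\lim_{t\to\infty}P(N(t)=n)$, but it is the same argument in substance.
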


\begin{proof}
Choose a $v \in \mathbb{R}$. Each $V_n \in V^C$ is a sum of normal random variables (2.2) and are therefore normal random variables with mean and variance given by (2.5) and (2.6). These two quantities determine the distribution. This means that $P(V_n < v) \rightarrow \left(2\pi\sigma^2_\infty\right)^{-1/2} \int_0^v \exp(-u^2/(2\sigma_\infty^2)) du$, which is a well-defined probability.

We can expand the distribution of any $V(t) \in V^T$ with the total rule of probability, i.e.,

\[
P\left(V(t) < v\right) = \sum_{k=0}^{\infty} P\left(V(t) < v \mid N(t) = n\right) P\left(N(t) = n\right).
\]

\noindent
From the Poisson distribution ($P\left(N(t) = n\right) = \exp(-\lambda t) (\lambda t)^n/n!$) we can show that $\lim_{t \rightarrow \infty}P(N(t) = n) = 1$ if $n = \infty$ and is zero otherwise. Therefore,

\begin{eqnarray*}
\lim_{t \rightarrow \infty} P\left(V(t) < v\right) &=& P\left(V(t) < v \mid N(t) = \infty \right) \\
														 &=& \lim_{n \rightarrow \infty} P\left(V_n < v\right),
\end{eqnarray*}

\noindent
which is well-defined by the previous result. The theorem then follows from the fact that $v$ is arbitrary.  
\end{proof}

\noindent
The above proof shows that the equilibrium distributions of both the collision velocity process and the time velocity process are normal with mean zero and variance $\sigma_x^2/(1 - c^2)$.

\subsection{The Boltzmann equation}

To establish the connection of the above model with the Boltzmann equation, let us briefly ignore the results in the previous section and consider how a physicist would approach the problem. We will deliberately gloss over certain mathematical technicalities to stay true to the approach. In its most general form, the Boltzmann equation for a spatially homogeneous gas under no external forces is

\begin{equation}
\frac{\partial f(v,t)}{\partial t} = \left. \frac{\partial f(v,t)}{\partial t}\right|_{coll}
\end{equation}

\noindent
The probability density $f(v,t)$ is proportional to the number of molecules in $P$ with velocities in $(v, v+ \delta v)$ at time $t$, where $\delta v$ is a small constant. The collision term $\partial f(v,t) / \partial t \mid_{coll}$ is equal to the sum of a loss and a gain term. To calculate the loss term, notice that for very small $\delta v$ almost all molecules in $P$ with velocity $v$ will leave $(v,v+\delta v)$ if they experience a collision with a molecule from $Q$. This can be seen directly from equation (1). Therefore, the change in the density of molecules in $(v,v+\delta v)$ due to such collisions during a short time interval of length $\delta t$ is

\begin{equation}
\delta f_{loss} = \left(f(v,t) - p_{coll} (\delta t) f(v,t) \right) - f(v,t),
\end{equation}

\noindent
where $p_{coll}(\delta t)$ is the probability of a collision during the interval $\delta t$. The first term on the right-hand side of (2.9) is the fraction of molecules in $P$ with velocities in $(v, v + \delta v)$ at time $t$ that still have velocities in $(v, v + \delta v)$ after the time period $\delta t$. Expanding $p_{coll}(\delta t)$ to first order in $\delta t$ gives

\begin{equation}
p_{coll}(\delta t) = a \delta t
\end{equation} 

\noindent
where $a$ is a constant. To obtain (2.10), $p_{coll}(0)$ was set to 0. We therefore have $\delta f_{loss} = -a \delta t f(v,t)$. Dividing through by $\delta t$ and taking the limit $\delta v \rightarrow 0$ and $\delta t \rightarrow 0$ gives the loss contribution of $\left.\partial f(v,t)/\partial t \right|_{coll}$ :

\begin{equation}
\left.\frac{\partial f(v,t)}{\partial t}\right|_{loss} = -a f(v,t).
\end{equation}

\noindent
As for the gain term, consider a molecule in $P$ which, at the beginning of the time interval $\delta t$, has a velocity $(v-x)/c$. According to (2.1), this molecule will acquire a velocity in $(v, v+ \delta v)$ if it collides with a molecule from the medium with velocity $x$. Letting $g(u)$ denote the velocity density of the surrounding gas (which is time independent), the change in the velocity probability density of the molecules in $P$ at point $v$ due to these collisions is

\[
\delta f_{gain} = \int_{-\infty}^{\infty} g(x) f((v-x)/c, t) p_{coll}(\delta t) dx.
\]

\noindent
Expanding $p_{coll}(\delta t)$ to first order in $\delta t$, dividing through by $\delta t$ and taking the limit $\delta v \rightarrow 0$ and $\delta t \rightarrow 0$ gives the gain contribution of $\left.\delta f(v,t) / \delta t\right|_{coll}$,

\begin{equation}
\left.\frac{\partial f(v,t)}{\partial t}\right|_{gain} = a \int_{-\infty}^{\infty} g(x) f((v-x)/c,t) dx.
\end{equation}

\noindent
Substituting (2.11) and (2.12) into (2.8) gives

\begin{equation}
\frac{\partial f(v,t)}{\partial t} = a \int_{-\infty}^{\infty} f((v -x)/c,t) g(x) dx - a f(v,t).
\end{equation}

(2.13) is the Boltzmann equation for our elastic collision model. The constant $a$ could be determined by initial conditions, however we will not do this here. Let us now consider the time velocity process described in the previous section. The time velocity process is a stochastic representation of (2.13) in the following sense.

\begin{mythe}
Let $f(v,t)$ denote the probability density function of the random variable $V(t) \in V^T$ at point $v \in \mathbb{R}$. Then $\partial f(v,t) / \partial t$ is given by (2.13) with $a = \lambda$.
\end{mythe}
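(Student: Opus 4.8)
The plan is to reduce everything to the conditioning structure already used in the proof of Theorem 2.1. Conditionally on $N(t) = n$, the random variable $V(t)$ equals $V_n$, which is $\mathcal{N}(0,\mathrm{var}(V_n))$ with $\mathrm{var}(V_n)$ given by (2.6), and hence has a density $f_n$. Because $N$ is independent of $V_0, X_1, X_2, \ldots$, the law of total probability gives $f(v,t) = \sum_{n=0}^{\infty} \pi_n(t)\, f_n(v)$, where $\pi_n(t) = e^{-\lambda t}(\lambda t)^n/n!$ is the Poisson mass function. Note that the variances $\mathrm{var}(V_n)$ are bounded below by $\min\{\sigma_0^2,\ \sigma_x^2/(1-c^2)\} > 0$, so the $f_n$ are uniformly bounded; this, together with the Poisson tail bounds, is what makes the manipulations below legitimate.

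Next I would differentiate the series term by term in $t$, using the elementary relations $\pi_0'(t) = -\lambda\pi_0(t)$ and $\pi_n'(t) = \lambda\pi_{n-1}(t) - \lambda\pi_n(t)$ for $n \geq 1$ (a uniform-convergence argument on each interval $(0,T)$, via the Gaussian bound on $f_n$ and the convergence of $\sum_n \sup_{[0,T]}\pi_n(t)$, justifies the interchange of $\partial/\partial t$ and $\sum$). Substituting and re-indexing the shifted sum yields $\partial f(v,t)/\partial t = -\lambda f(v,t) + \lambda\sum_{n=0}^{\infty}\pi_n(t)\, f_{n+1}(v)$. The first term is already the loss term of (2.13) with $a = \lambda$, so only the gain term remains.

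For the gain term the key identity is the one-step convolution recursion for the densities. Since $V_{n+1} = cV_n + X_{n+1}$ with $X_{n+1}\sim\mathcal{N}(0,\sigma_x^2)$ independent of $V_n$, a change of variables gives $f_{n+1}(v) = \frac{1}{|c|}\int_{-\infty}^{\infty} g(x)\, f_n\!\left((v-x)/c\right) dx$, where $g$ is the $\mathcal{N}(0,\sigma_x^2)$ density (and $0 \le c < 1$ under the hypothesis $m_p \ge m_q$, so $|c| = c$). Multiplying by $\pi_n(t)$, summing over $n$, and interchanging the sum and the integral (valid by Tonelli, all integrands being nonnegative) gives $\sum_{n=0}^{\infty}\pi_n(t)\, f_{n+1}(v) = \frac{1}{c}\int_{-\infty}^{\infty} g(x)\, f((v-x)/c,\, t)\, dx$. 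Combining this with the previous display produces exactly (2.13) with $a = \lambda$ — with the understanding that the gain integral in (2.13) carries the change-of-variables Jacobian $1/c$, which is among the technicalities deliberately glossed over in the heuristic derivation of Section 2.1.

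I expect no real conceptual obstacle here; the work is bookkeeping. The two points that need care are the rigorous justification of the two interchanges (differentiation versus summation, and summation versus integration), both of which follow from the uniform lower bound on $\mathrm{var}(V_n)$ and standard Poisson/Gaussian tail estimates, and keeping track of the Jacobian factor so that the final identity matches the stated form of the Boltzmann equation.
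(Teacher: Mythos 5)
Your proposal follows essentially the same route as the paper's proof: condition on $N(t)=n$, write $f(v,t)=\sum_n \pi_n(t) f_n(v)$ with $\pi_n(t)=e^{-\lambda t}(\lambda t)^n/n!$, differentiate term by term using $\pi_n'(t)=\lambda\pi_{n-1}(t)-\lambda\pi_n(t)$, and close the gain term with the one-step convolution recursion for $f_{n+1}$; your added care about justifying the two interchanges is fine but not where the paper differs from you. The one substantive point is the Jacobian: the paper's (2.17) is stated as $f_{n+1}(v)=\int f_n((v-x)/c)\,g(x)\,dx$ with no $1/c$, and its heuristic derivation of (2.13) likewise omits the factor, so the paper's proof lands exactly on (2.13) as printed, whereas your change of variables correctly produces $f_{n+1}(v)=\frac{1}{c}\int g(x) f_n((v-x)/c)\,dx$ and hence a gain term $\frac{\lambda}{c}\int g(x) f((v-x)/c,t)\,dx$. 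You are right that the factor genuinely belongs there (the density of $cV_n$ at $u$ is $c^{-1}f_n(u/c)$ for $0<c<1$), so the discrepancy you flag is an imprecision in the paper's (2.13) and (2.17), not a defect in your argument; the only caveat worth adding is the degenerate case $c=0$ (i.e.\ $m_p=m_q$), where the convolution with the Jacobian breaks down and $V_{n+1}=X_{n+1}$ should be handled directly.
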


\begin{proof}
Conditioning $V(t)$ on $\{N(t) = n\}$, we can write

\begin{eqnarray*}
P\left(V(t) < v\right) &=& \sum_{n=0}^{\infty} P\left(V(t) < v\mid N(t) = n\right) P\left(N(t) = n\right) \\
                       &=& \sum_{n=0}^{\infty} P\left(V_n < v\right) P\left(N(t) = n\right).
\end{eqnarray*}

\noindent
Because $V_n$ is a normal random variable, $P\left(V_n < v\right)$ is \textit{a.e} differentiable with respect to $v$. Differentiating the above equation term-by-term then gives a well-defined expression for the probability density of $V(t)$, namely

\[
f(v,t) = \sum_{n=0}^{\infty} f_n(v) P(N(t) = n),
\]

\noindent
where $f_n(v)$ is the probability density of the random variable $V_n$. Differentiating the above with respect to time gives

\begin{equation}
\frac{\partial f(v,t)}{\partial t} = \sum_{n=0}^{\infty} f_n(v) \frac{\partial}{\partial t} P(N(t) = n).
\end{equation}

\noindent
By differentiating the Poisson distribution we can show that $\partial P(N(t) = n) / \partial t = \lambda P(N(t) = n-1) - \lambda P(N(t) = n)$. Therefore,

\begin{equation}
\frac{\partial f(v,t)}{\partial t} = \lambda \sum_{n=0}^{\infty} \left(f_n(v) P(N(t) = n - 1) - f_n(v) P(N(t) =n) \right)
\end{equation}

\noindent
where $P(N(t) = - 1)$ is defined to be zero. The term on the far right of (2.15) is equal to $\lambda f(v,t)$. The first term can be re-written as

\begin{equation}
\sum_{n=0}^{\infty} f_n(v) P(N(t) = n - 1)	= \sum_{n=1}^{\infty} f_{n+1}(v) P(N(t) = n).
\end{equation} 

\noindent
According to (2.1), the density of $V_{n + 1}$ is a convolution of the density of $V_n$ at point $(v - x)/c$ and the density of $X_{n+1}$ at point $x$. That is,

\begin{equation}
f_{n+1}(v) = \int_{-\infty}^{\infty} f_n((v-x)/c) g(x) dx.
\end{equation}

\noindent
Substituting (2.17) into (2.16) and carrying out the sum gives

\begin{equation}
\sum_{n=0}^{\infty} f_n(v) P(N(t) = n-1) = \int_{-\infty}^{\infty} f((v-x)/c) g(x) dx.
\end{equation}

\noindent
Substituting (2.18) into (2.15) gives the result.
\end{proof}

\subsection{The Ornstein-Uhlenbeck process}

Now we will show that as successive collisions between $P_0$ and molecules of $Q$ occur more and more frequently, the time velocity process converge pointwise in probability to the paths of an Ornstein-Uhlenbeck process. The limit that describes this is $\lambda \rightarrow \infty$. A sequence of process $Y_1, Y_2, \ldots$ is said to converge to a process $Z$ pointwise in probability if $P(\mid Y_n(t) - Z(t) \mid > \epsilon) \rightarrow 0$ for all $t$. 

To prove this, we will construct the time velocity process in a slightly different way. Consider the probability space $\mathcal{P} = \left(\Omega, \mathcal{F}, P\right)$. Let $\lambda_1 < \lambda_2 < \cdots$ be a sequence of positive constants such that $\lambda_n \rightarrow \infty$, and define the family $\{\pi_n\}_{n=1}^{\infty}$ such that 

\begin{equation}
\pi_n = \left\{0 < t_1^n < t_2^2 < \cdots \right\},
\end{equation}

\noindent
where $t_k^n \rightarrow \infty$ for each $n$ and 

\[
\mbox{mesh}(\pi_n) = 1/\lambda_n
\]

\noindent
for each $n$. Next, let $W = \{W(t)\}_{t \in \mathbb{R}_+}$ be a Wiener process (following the standard definition, e.g., [25]) on $\mathcal{P}$ and define a sequence $X_1^n, X_2^n, \ldots$ for each $n$ such that

\[
X_k^n = \sigma_x^0 \Delta W(t_{k+1}^n),
\]

\noindent
where $\Delta W(t_{k+1}^n) = W(t_{k+1}^n) - W(t_{k}^n)$ and $\sigma_x^0 > 0$ is a constant. Finally, let the $V^T_1, V^T_2, \ldots$ be a sequence of stochastic processes on $\mathcal{P}$, where for each $V(t)^n \in V^T_n$,

\[
V^n(t) = V_0 c_n^{N(t)^n} + \sum_{k=1}^{N(t)^n} c_n^{N(t)^n - k} X_k^n,
\] 

\noindent
where $c_n$ is a constant and $0 < c_n \leq 1$. As with the time velocity process defined earlier, $V_0 \sim \mathcal{N}(0,\sigma_0^2)$ and each Poisson process $N^n$ is independent of the sequence $X_1^n, X_2^n, \ldots$ and $V_0$. For each $n$, $X_1^n, X_2^n, \ldots$ is a sequence of independent $\mathcal{N}(0, \sigma_x^0/\lambda_n)$, and so for each $n$ the time velocity process defined here satisfies the criteria of Definition 2.

We will now make an important addition to the above construction. Assume that there exists a constant $\alpha < 1$ such that 

\[
c_n = \alpha^{1/\lambda_n}.
\]

\noindent
This means that $c_n \rightarrow 1$ as $\lambda \rightarrow \infty$. This can be interpreted as follows. Recall the definition of $c$ for the processes constructed in section 2,

\[
c = \frac{m_p - m_q}{m_p + m_q}.
\]

\noindent
In this case, $c \rightarrow 1$ as $m_p \rightarrow \infty$. The assumption that $c_n \rightarrow 1$ in the present construction might be therefore be taken to mean that $m_p$ becomes very large compared to $m_q$. In other words, the limiting time velocity process describes a particle with a very large mass, just as is assumed in the usual theories of Brownian motion. We can use this interpretation to understand the definition of $X_k^n$ given above. According to this definition,

\[
\mbox{var} X_k^n \rightarrow 0
\]

This means that the range of velocities in the surrounding gas becomes very narrow in the limit. This can be understood by imagining ourselves riding on $P_0$ as it travels through the medium. Because $P_0$ is very heavy and slow compared to the particles of the surroundings, the particles of the surroundings appear to be moving extremely fast, too fast for us to distinguish their speeds. The assumptions used in this construction can be regarded as a renormalisation of the time velocity process. We have employed similar renormalisations in similar studies of weak convergence to Gaussian processes [22, 24]. 

\begin{mythe}
Let $Y = \{Y(t)\}_{t \in \mathbb{R}_+}$ be the following OU process. For all $Y(t) \in Y$,

\begin{equation}
Y(t) = Y(0) e^{-\theta t} + \eta \int_0^t e^{-\theta (t - s)} dW(s),
\end{equation}

\noindent
where $\theta$ and $\eta$ are positive constants and $W$ is the Wiener process given above. Suppose that $Y(0) = V_0$ \textit{a.s.} and that the integral $\int_0^t \exp(-\theta (t-s)) dW(s)$ is an Ito integral. Then $V_n^T$ converges pointwise in probability to an OU process as $n \rightarrow \infty$.	
\end{mythe}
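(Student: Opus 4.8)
The plan is to first read off the limiting OU parameters, then split both processes into an ``initial--velocity memory'' piece and a ``noise'' piece and treat the two separately. Since $c_n = \alpha^{1/\lambda_n} = \exp\bigl((\ln\alpha)/\lambda_n\bigr)$ with $\alpha<1$, the natural choice is $\theta = -\ln\alpha>0$ and $\eta=\sigma_x^0$, so that
\begin{align*}
Y(t) &= V_0 e^{-\theta t} + \sigma_x^0\int_0^t e^{-\theta(t-s)}\,dW(s) =: A(t)+B(t),\\
V^n(t) &= V_0 c_n^{N^n(t)} + \sum_{k=1}^{N^n(t)} c_n^{N^n(t)-k}X_k^n =: A_n(t)+B_n(t).
\end{align*}
Because convergence in probability is preserved under addition, it suffices to show $A_n(t)\to A(t)$ and $B_n(t)\to B(t)$ in probability for each fixed $t$.

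For the first piece I would invoke the law of large numbers for the Poisson process: $E\bigl(N^n(t)/\lambda_n\bigr)=t$ and $\mathrm{var}\bigl(N^n(t)/\lambda_n\bigr)=t/\lambda_n\to0$, so Chebyshev's inequality gives $N^n(t)/\lambda_n\to t$ in probability, whence $c_n^{N^n(t)}=\alpha^{N^n(t)/\lambda_n}\to\alpha^t=e^{-\theta t}$ in probability by continuity of $x\mapsto\alpha^x$. Multiplying by $V_0$, which is finite almost surely and independent of the processes $N^n$, preserves this convergence, so $A_n(t)\to A(t)$ in probability.

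The substance is the noise term. I would first replace the random collision count by the deterministic $m_n:=\lfloor\lambda_n t\rfloor$, setting $\tilde B_n(t):=\sum_{k=1}^{m_n}c_n^{m_n-k}X_k^n$, and show $B_n(t)-\tilde B_n(t)\to0$ in probability. Writing $N^n(t)=m_n+J_n$ and factoring $c_n^{N^n(t)-k}=c_n^{J_n}c_n^{m_n-k}$ over the common indices, the difference (say when $J_n\ge0$; the opposite sign is symmetric) equals $(c_n^{J_n}-1)\tilde B_n(t)$ plus a block of at most $|J_n|$ further summands. Since the Poisson processes are independent of $W$, the factor $c_n^{J_n}-1=\alpha^{J_n/\lambda_n}-1$ is independent of $\tilde B_n(t)$; it tends to $0$ in probability because $E|J_n|\le\sqrt{\lambda_n t}+1$ forces $|J_n|/\lambda_n\to0$, while $\tilde B_n(t)$ is bounded in $L^2$ since $\mathrm{var}(\tilde B_n(t))\le(\sigma_x^0)^2\,\mathrm{mesh}(\pi_n)\sum_{j\ge0}c_n^{2j}=(\sigma_x^0)^2/\bigl(\lambda_n(1-c_n^2)\bigr)$, which converges to $(\sigma_x^0)^2/(2\theta)$. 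The leftover block has conditional second moment at most $|J_n|(\sigma_x^0)^2\,\mathrm{mesh}(\pi_n)$, whose expectation is of order $\lambda_n^{-1/2}\to0$. Hence $B_n(t)-\tilde B_n(t)\to0$ in probability.

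It remains to show $\tilde B_n(t)\to B(t)$ in $L^2$. Taking, as the construction intends, the uniform partition $t_k^n=k/\lambda_n$, $\tilde B_n(t)$ is exactly the Wiener integral $\int_0^t\phi_n\,dW$ of the deterministic step function $\phi_n(s)=\sigma_x^0 c_n^{m_n-\lfloor\lambda_n s\rfloor}$ up to an end correction whose $L^2$ norm is $O(\lambda_n^{-1/2})$, so by the Ito isometry $E\bigl((\tilde B_n(t)-B(t))^2\bigr)=\|\phi_n-\sigma_x^0 e^{-\theta(t-\cdot)}\|_{L^2[0,t]}^2+o(1)$; since $(m_n-\lfloor\lambda_n s\rfloor)/\lambda_n\to t-s$ pointwise and $0<c_n\le1$, dominated convergence makes this vanish, so $\tilde B_n(t)\to B(t)$ in $L^2$ and hence in probability. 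Chaining the three convergences gives $V^n(t)\to Y(t)$ in probability for each $t$, as claimed. The step I expect to be the real obstacle is the first half of the noise--term argument: controlling the discrepancy produced by the square--root--sized fluctuations of the Poisson count, keeping simultaneous track of the random number of leftover summands and the random geometric prefactor; the rest is a law--of--large--numbers statement and a routine Riemann--sum--to--Ito--integral passage.
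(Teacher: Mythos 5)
Your proposal is correct, and it reaches the same parameter identifications as the paper ($\theta=-\ln\alpha$, $\eta=\sigma_x^0$) via the same initial split of both processes into an initial-velocity piece and a noise piece; however, the technical route is genuinely different. The paper works with the Poisson jump times $U_k^n$: it argues via the strong law of large numbers that $\lambda_n U_k^n\approx k$, controls the $V_0$ term through the continuous mapping theorem and the vanishing interarrival gaps, and then handles the noise term by introducing a family of random partitions of $[0,t]$ that ``tends to $[0,t]$'' and identifying the limit of the random Riemann sums $\sum_k\alpha^{U_t^n-U_k^n}\Delta W(t_{k+1}^n)$ with the Ito integral taken as a limit over those partitions. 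You instead freeze the collision count at its deterministic mean $m_n=\lfloor\lambda_n t\rfloor$, control the discrepancy $B_n(t)-\tilde B_n(t)$ by Poisson second-moment bounds (the $(c_n^{J_n}-1)$ factor times an $L^2$-bounded sum, plus a leftover block of conditional variance $O(|J_n|/\lambda_n)$), and then observe that $\tilde B_n(t)$ is a Wiener integral of a deterministic step function, so the Ito isometry and dominated convergence give $L^2$ convergence to $\sigma_x^0\int_0^t e^{-\theta(t-s)}dW(s)$. What each buys: your argument is more quantitative and self-contained (explicit $O(\lambda_n^{-1/2})$ rates, $L^2$ convergence of the noise part, Chebyshev instead of the SLLN for $N^n(t)/\lambda_n\to t$), and it sidesteps the shakier steps in the paper's proof (the informal ``approximations that improve with $n$'' and the uniform control of $\sup_k|U_k^n-U_{k-1}^n|$, whose stated probability bound is not right as written); the paper's route stays closer to the physical picture of collisions at random times and does not need your reading that the deterministic grid $\pi_n$ is (essentially) uniform, $t_k^n=k/\lambda_n$ --- though that reading is the natural one given the mesh condition, and the paper implicitly makes the same identification when matching $\lambda_nU_k^n$ with $k$.
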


\begin{proof}
Using the assumption $c_n = \alpha^{1/\lambda_n}$, $V(t)^n$ can be rewritten as

\[
V(t)^n = V_0 \alpha^{N(t)^n/\lambda_n} + \sum_{k=1}^{N(t)^n} \alpha^{(N(t)^n - k)/\lambda_n} \Delta W(t_{k+1}^n).
\]

\noindent
And if we set

\begin{equation}
\theta = -\ln \alpha
\end{equation}

\noindent
we can write (2.20) as

\[
Y(t) = V_0 \alpha^t + \eta \int_{0}^{t} \alpha^{t-s} dW(s).
\]

\noindent
Let

\[
A(t) = \alpha^{N(t)^n/\lambda_n} V_0 - \alpha^t V_0
\]

\noindent
and

\[
B(t) = \sigma_x^0 \sum_{k=1}^{N(t)^n} \alpha^{(N(t)^n - k)/\lambda_n} \Delta W(t_{k+1}^n) - \eta \int_{0}^t \alpha^{t - s}dW(s).
\]

Now, let $U_1^n, U_2^,\ldots$ be the jump times of the Poisson process $N^n$, and $U_t^n = \sup_k(U_k^n \leq t)$. To prove that $A(t)$ and $B(t)$ converge in probability to zero, we will first show that for an arbitrary $n$ and $k \leq N(t)^n$, $\lambda_n U_k^n$ can be brought arbitrary close to $k$ (with respect to an appropriate metric $d(x,y)$ on the space of random variables topologised by convergence in probability, e.g., the Ky Fan metric) with increasing $n$. For an arbitrary $k \leq N(t)^n$ we can use the strong law of large numbers to form the approximation

\[
U_k^n = \sum_{i=1}^k K_i \approx k E(K_1)
\]

\noindent
where the approximation can be made arbitrary accurate (with respect to the metric $d$) by increasing $n$. Thus, $\lambda_n U_k^n$ can be brought arbitrarily close to $k$ by increasing $n$, as claimed.

To prove that $A(t) \rightarrow 0$ in probability, we make the approximation

\[
A(t) \approx V_0 \alpha^{U_t^n} - V_0 \alpha^t.
\]

\noindent
The continuous mapping theorem shows that this approximation can be made as accurate as desired (with respect to $d$) by increasing $n$. Because $\alpha > 1$, we can write

\[
V_0 \alpha^{U_t^n} - V_0 \alpha^t \leq V_0 \alpha^{U_t^n} - V_0 \alpha^{U_{t+1}^n}
\]

\noindent
where $U_{t+1}^n = \inf(U_k^n > U_t^n)$. (2.3) shows that $U_{t+1}^n - U_t^n \rightarrow 0$ in probability, and appealing to the continuous mapping theorem once again shows that $V_0 \alpha^{U_t^n} - V_0 \alpha^t \rightarrow 0$ in probability. We then have that $A(t) \rightarrow 0$ in probability.

As for $\mid B(t)|$, we can write

\begin{eqnarray*}
\mid B(t)\mid &\leq& \left|\sigma_x^0 \sum_{k=1}^{N(t)^n}\alpha^{(N(t)^n - k)/\lambda_n}\Delta W(t_{k+1}^n) - \sigma_x^0 \sum_{k=1}^{N(t)^n}\alpha^{U_t^n - U_k^n} \Delta W(t_{k+1}^n) \right| \mbox{    (T1)} \\
							&+& \left|\sigma_x^0 \sum_{k=1}^{N(t)^n}\alpha^{U_t^n - U_k^n} \Delta W(t_{k+1}^n) - \eta \int_0^t \alpha^{t-s} dW(s) \right| \mbox{    (T2)}
\end{eqnarray*}

\noindent
This is because for any three functions $f(t)$, $g(t)$ and $h(t)$, $\left|f(t) - g(t)\right|$ $=\left|\left(f(t) - h(t)\right) +  \left(h(t) - g(t)\right)\right|$ $\leq \left|f(t) - h(t)\right| + \left|h(t) - g(t)\right|$. Similar to what was done previously, we can make the approximation

\[
\sigma_x^0 \sum_{k=1}^{N(t)^n}\alpha^{(N(t)^n - k)/\lambda_n}\Delta W(t_{k+1}^n) \approx \sigma_x^0 \sum_{k=1}^{N(t)^n}\alpha^{U_t^n - U_k^n} \Delta W(t_{k+1}^n) 
\]

\noindent
which improves in accuracy as $n \rightarrow \infty$. Then $P(\mbox{T1} > \epsilon) \rightarrow 0$ trivially for all $\epsilon > 0$. As for the term (T2), define the family $\{\mu_n\}_{n=1}^\infty$, where $\mu_n = \{0 \leq U_1^n \leq U_2^n \leq \cdots \leq U_t^n \leq t\}$. Each $\mu_n$ is a random partition of $[0,t]$. We will say that the family $\{\mu_n\}_{n=1}^\infty$ \textit{tends to} $[0,t]$ if

\begin{description}
\item[1.] $\lim_{n \rightarrow \infty} U_t^n = t$ in probability.
\item[2.] $\lim_{n \rightarrow \infty} U_1^n = 0$ in probability.
\item[3.] $\sup_{k} \left|U_k^n - U_{k-1}^n\right| \rightarrow 0$ in probability.
\end{description}

\noindent
To check 1, let $\epsilon > 0$ and note that if $U_{t + 1}^n - U_t^n < \epsilon$ then $t - U_t^n < \epsilon$. Because the converse is not necessarily true, this implies that

\[
\left\{t - U_t^n < \epsilon \right\} \supseteq \left\{U_{t + 1}^n - U_t^n < \epsilon \right\},
\]

\noindent
and so by monotonicity,

\[
P\left(t - U_t^n < \epsilon \right) \geq P\left(U_{t+1}^n - U_t^n < \epsilon \right) = 1 - e^{-\lambda_n t} \rightarrow 1.
\]

\noindent
Taking the limit $\epsilon \rightarrow 0$ confirms condition 1. Condition 2 and 3 are true because $P(\mid U_1^n\mid > \epsilon) = e^{-\lambda_n \epsilon} \rightarrow 0$ and $P\left(\sup_{k} \left|U_k^n - U_{k-1}^n\right| > \epsilon \right)$ $= e^{\lambda_n \epsilon} \rightarrow 0$ for all $\epsilon > 0$, according to (2.3). The family $\pi^t = \{\pi_n^t\}_{n=1}^\infty$, where $\pi_n^t = \{0 < t_1^n < t_2^n < \cdots < t_{m+1}^n < t\} \subset \pi_n$, also tends to $[0,t]$. And so for the left-hand term in (T2) we have that 

\begin{eqnarray*}
\sum_{k=1}^{N(t)^n} \alpha^{U_t^n - U_k^n} \Delta W(t_{k+1}^n) &=& \alpha^{U_t^n - t} \sum_{k=1}^{N(t)^n} \alpha^{t - U_k^n} \Delta W(t_{k+1}^n) \\
									&& \rightarrow \lim_{\pi_n^t \rightarrow \pi_{\infty}^t} \sum_{k=1}^n \alpha^{t - t_{k}} \Delta W(t_{k+1}^n)
\end{eqnarray*}

\noindent
in probability. The integral in term (T2) is an Ito integral by assumption, and we can define it as

\[
\int_0^t \alpha^{t - s}dW(s) = \lim_{\pi_n^t \rightarrow \pi_{\infty}^t} \sum_{k=1}^n \alpha^{t - t_{k}} \Delta W(t_{k+1}^n).
\]

\noindent
Therefore, it we set

\begin{equation}
\eta = \sigma_x^0,
\end{equation}

\noindent
then $P(\mbox{T2} > \epsilon) \rightarrow 0$.
\end{proof}

The introduction of this report mentioned several models similar to the above which converge to the Ornstein-Uhlenbeck process and other similar processes. However, Theorem 2.3 has  at least two features that make it interesting. One reason is that pointwise convergence in probability is a particularly strong mode of convergence. Previous research has mainly considered convergence in distribution [10, 11, 12, 13, 14, 15]. The other is that (2.21) and (2.22) in the proof give simple interpretations of the parameters $\theta$ (the friction coefficient) and $\eta$ (the diffusion coefficient). Equation (19) says that 

\[
\theta = -\lambda \ln \left(\frac{m_p}{m_p + m_q} - \frac{m_q}{m_p + m_q}\right),
\]

\noindent
where $\lambda$ and $m_p$ are `very large' parameters. Thus, $\theta$ is related to the average collision frequency and a fractional mass difference. This simply means says that friction on the Brownian particle arises when a relatively heavy particle is ambushed by many lighter particles. This is an alternate interpretation to the standard `heat bath' interpretation of the friction coefficient, in which friction is an consequence of a particle simultaneously interacting with many other particles (see, for example, [26]), rather than a seqeunce of particles like we have here. (2.20) says that the diffusion coefficient is related to the root-mean-square deviation of the velocities in the surroundings, but that it is independent of other quantities such as the frequency of collisions or the mass of the particles.

\section{Oscillations between low and high energy states}

We now want to describe the equilibration process in terms of an `oscillation' between low energy and high energy states. To make this concept more precise, define

\[
F_n = \left\{V_n^2 > V_0^2 \right\}
\]

\noindent
and

\[
D_n = \left\{V_n^2 \leq V_0^2 \right\}.
\]

\noindent
If $F_n$ occurs then $P_0$ has gained a net amount of kinetic energy after undergoing $n$ collisions with the surrounding gas. If $D_n$ occurs, then the test particle has lost a net amount of kinetic energy after undergoing $n$ collisions with the surrounding gas. For convenience, $D_0$ occurs \textit{a.s}. The appropriate analogues for these events in for the time velocity process are

\[
F(t) = \left\{V(t)^2 > V_0^2 \right\}
\]

\noindent
and

\[
D(t) = \left\{V(t)^2 \leq V_0^2 \right\}.
\]

\noindent
Now, let

\[
F = \left\{v \in \mathbb{R} : V_0^2 - v^2 < 0 \right\},
\]

\noindent
and

\[
D = \left\{v \in \mathbb{R} : V_0^2 - v^2 \geq 0 \right\}.
\]

\noindent
$D$ and $F$ are the `low energy' and `high energy' states that we are interested in. If $D_n$ occurs then $V_n \in D$ and if $F_n$ occurs then $V_n \in F$. Similarly, if $D(t)$ occurs then $V(t) \in D$ and if $F(t)$ occurs then $V(t) \in F$. Clearly, $D(0)$ occurs \textit{a.s}.

\subsection{Recurrence of $D$ and $F$}

If $D$ and $F$ are recurrent, then $P_0$ enters and exits the states $F$ and $D$ infinitely often (\textit{i.o.}) with probability 1 as it travels through $Q$. In this situation, $P_0$ therefore switches between high and low energy states over time. We will refer to this switching as an `oscillation'. While this is not a true oscillation, the terminology is useful in order to create the analogy with transverse relaxation in magnetic resonance spectroscopy. The occurrence of such oscillations is proven in this section. Recall that for a sequence of events $E_1, E_2, \ldots$,

\begin{eqnarray}
\{E_n \mbox{ } i.o.\} &=& \bigcap_{n=1}^{\infty} \bigcup_{m=n}^{\infty} E_n \nonumber \\
                      &=& \left\{\omega \in \Omega \mbox{ that belong to infinitely many of the } E_n \right\}. 
\end{eqnarray}

\begin{mythe}
$P(D_n \mbox{ } i.o) = P(F_n \mbox{ } i.o) = 1$
\end{mythe}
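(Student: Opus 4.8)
\noindent The obstacle is that $V_n$ and the random threshold $V_0^2$ are correlated, so neither Borel--Cantelli lemma applies to the events $F_n,D_n$ directly. I would strip out the correlation by writing $V_n = c^n V_0 + R_n$ with $R_n := \sum_{k=1}^n c^{n-k}X_k$, noting that $R_n$ is independent of $V_0$ and that $c^n V_0 \to 0$ a.s.\ (since $|c|<1$). Writing $\sigma_\infty^2 := \sigma_x^2/(1-c^2)$, the core of the argument is then a ``topological recurrence'' statement for the autoregression $(R_n)$: almost surely every real number is a subsequential limit of $(R_n)_{n\ge 1}$, i.e.\ for each $a\in\mathbb{R}$ and $\epsilon>0$ one has $|R_n-a|<\epsilon$ for infinitely many $n$, a.s. Granting this (applied for all rational $a,\epsilon$ simultaneously, on one a.s.\ event), the theorem follows quickly: on the further a.s.\ event $\{V_0\neq 0\}$, pick $N$ with $|c^nV_0|<|V_0|/2$ for $n\ge N$; recurrence with $a=2|V_0|$, $\epsilon=|V_0|/2$ then forces $|V_n|>|V_0|$ (hence $F_n$) for infinitely many $n\ge N$, and recurrence with $a=0$, $\epsilon=|V_0|/2$ forces $|V_n|\le|V_0|$ (hence $D_n$) for infinitely many $n\ge N$.

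To prove the recurrence statement, fix $a,\epsilon$ and pass to a sparse subsequence along which a genuinely fresh, independent block of the $X_k$ dominates. Choose $n_j$ with gaps $g_j := n_j-n_{j-1}\to\infty$ (e.g.\ $g_j=j$) and split
\[
R_{n_j} \;=\; c^{g_j} R_{n_{j-1}} \;+\; S_j, \qquad S_j := \sum_{k=n_{j-1}+1}^{n_j} c^{\,n_j-k} X_k .
\]
The $S_j$ are mutually independent, with $S_j\sim\mathcal{N}\!\big(0,\sigma_x^2(1-c^{2g_j})/(1-c^2)\big)$, whose variance increases to $\sigma_\infty^2$; hence $P(|S_j-a|<\epsilon/2)$ converges to the positive number $\int_{a-\epsilon/2}^{a+\epsilon/2}(2\pi\sigma_\infty^2)^{-1/2}\exp(-u^2/2\sigma_\infty^2)\,du$, so $\sum_j P(|S_j-a|<\epsilon/2)=\infty$ and the second Borel--Cantelli lemma gives $|S_j-a|<\epsilon/2$ for infinitely many $j$, a.s. On the other hand $\mathrm{var}(R_{n_{j-1}})\le\sigma_\infty^2$, so Chebyshev gives $P(|c^{g_j}R_{n_{j-1}}|>\epsilon/2)\le 4\sigma_\infty^2 c^{2g_j}/\epsilon^2$, which is summable; the first Borel--Cantelli lemma then gives $|c^{g_j}R_{n_{j-1}}|<\epsilon/2$ for all large $j$, a.s. Intersecting these two a.s.\ events yields $|R_{n_j}-a|<\epsilon$ for infinitely many $j$, proving the claim.

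The step I expect to be the real work is precisely this recurrence statement for $(R_n)$: because $(R_n)$ is an autoregression rather than a sum of independent terms, the second Borel--Cantelli lemma cannot be used along the full sequence, and the sparse-subsequence splitting above is the device that circumvents this. An alternative, less self-contained route is to condition on $V_0=v_0$ (with $v_0\neq 0$) and observe that $(V_n)$ is then the Gaussian AR$(1)$ Markov chain $V_n=cV_{n-1}+X_n$, which is positive Harris recurrent with invariant law $\mathcal{N}(0,\sigma_\infty^2)$; such a chain visits every Borel set of positive Lebesgue measure --- in particular $\{|v|>|v_0|\}$ and $\{|v|\le|v_0|\}$ --- infinitely often from every starting point, and integrating over $v_0$ finishes the proof. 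The remaining ingredients ($c^nV_0\to 0$ a.s., $P(V_0\neq 0)=1$, and the deduction of $F_n,D_n$ i.o.\ from the recurrence of $(R_n)$) are routine.
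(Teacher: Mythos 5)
Your proposal is correct, but it takes a genuinely different route from the paper. The paper argues via a zero--one law: it notes that $\{D_n \mbox{ i.o.}\}$ is a permutable event of the i.i.d.\ sequence $X_1, X_2, \ldots$, invokes the Hewitt--Savage 0--1 law to reduce to the dichotomy $P(D_n \mbox{ i.o.}) \in \{0,1\}$, and then excludes the value $0$ by contradiction: it extracts an $m$ with $P(D_n)=0$ for all $n>m$, rewrites $V_n$ in terms of $V_m$ and the subsequent noise, takes expectations, and lets $n-m\to\infty$ to reach the absurdity $0>\sigma_0^2$. You never touch a 0--1 law: you peel off the decaying term $c^n V_0$ and prove a stronger, quantitative recurrence statement for the AR(1) part $R_n$ (almost surely every real number is a subsequential limit of $(R_n)$) by extracting independent fresh blocks $S_j$ along a sparse subsequence, applying the second Borel--Cantelli lemma to $\{|S_j-a|<\epsilon/2\}$ --- legitimate since the blocks are disjoint, hence independent, with variances increasing to $\sigma_\infty^2>0$ --- and Chebyshev plus the first Borel--Cantelli lemma to suppress the carried-over term $c^{g_j}R_{n_{j-1}}$; the passage to $F_n$ and $D_n$ i.o.\ on $\{V_0\neq 0\}$, choosing rational $a,\epsilon$ depending on $V_0(\omega)$ inside one countable intersection of a.s.\ events, is sound. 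What each approach buys: the paper's proof is shorter on its face, but its reduction from $P(D_n \mbox{ i.o.})=0$ to ``$P(D_n)=0$ for all large $n$'' is precisely the kind of step your argument does not need, and its permutability claim requires some care because the events involve $V_0$ as well as the $X_k$'s; your argument is elementary, self-contained, and proves more, namely that a.s.\ $(V_n)$ visits every open interval infinitely often, which yields both recurrence statements at once (your alternative via Harris recurrence of the Gaussian AR(1) chain is also valid but imports heavier machinery). The only hypotheses worth stating explicitly are $\sigma_0^2>0$ (so $P(V_0\neq 0)=1$; the $D_n$ half fails in the degenerate case $\sigma_0=0$) and $\sigma_x^2>0$, both implicit in the paper's model, together with the observation that $0\le c<1$ there, so $|c|<1$ holds as you assume.
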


\begin{proof}
The event $\{D_n \mbox{ } i.o\}$ is permutable because its occurrence will not be affected by finite permutations of the indices of $X_1, X_2, \ldots$ [25]. Suppose that $P(D_n \mbox{ } i.o) < 1$. Then the Hewitt-Savage 0-1 law implies that $P(D_n \mbox{ } i.o) = 0$. There therefore exists an $m < \infty$ such that $P(D_n) = 0$ for all $n > m$. In other words,

\begin{equation}
V_0^2 - V_n^2 < 0
\end{equation} 

\noindent
for all $n > m$ with probability 1. (2.3) shows that 

\[
c^n V_0 +  \sum_{k=1}^n X_k = c^{n-m}V_m + Y_{n-m}
\]

\noindent
where

\[
Y_{n-m} =  \sum_{k = m+1}^n c^{n-k} X_k.
\]

\noindent
Substituting this into (3.2) gives

\[
2c^{n-m} V_m Y_{n-m} - Y_{n-m} > V_0^2 - c^{2(n-m)}V_m^2.
\]

\noindent
Taking the expected value and noting that $V_m$ and $Y_{n-m}$ are independent mean zero random variables, we obtain

\[
0 > \sigma_0^2 - c^{2(n-m)}E(V_m^2).
\]

\noindent
In the limit $(n - m) \rightarrow \infty$, the above becomes $0 > \sigma_0^2$, which is nonsensical. So we conclude that $P(D_n \mbox{ } i.o) = 1$. A similar argument gives $P(F_n \mbox{ } i.o) = 1$.
\end{proof}

Now we need to establish the connection between this result and $V^T$ \textit{entering} $F$ and $D$ \textit{i.o}. A very reasonable definition of this event is

\[
\left\{V^T \mbox{ enters } D \mbox{ } i.o. \right\} = \left\{V_{n} \in D, V_{n - 1} \in F \mbox{ } i.o. \right\},
\]

\noindent
and similarly for $\left\{V^T \mbox { enters } F \mbox{ } i.o.\right\}$. However, Theorem 3.2 immediately implies that $\left\{V_{n} \in D, V_{n - 1} \in F \mbox{ } i.o. \right\}$ occurs with probability 1. And so we have the following theorem.

\begin{mythe}
$P\left(V \mbox{ enters } F \mbox{ } i.o.\right) = P\left(V \mbox{ enters } D \mbox{ } i.o.\right) = 1$
\end{mythe}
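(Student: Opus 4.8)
The plan is to reduce the statement to the recurrence theorem just proved by a short, essentially deterministic, argument about binary sequences, and then pass back from the collision index to continuous time using the fact that the Poisson clock never stops.

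First I would recall the two facts that make the reduction work. For each fixed $n$ the events $D_n=\{V_n^2\le V_0^2\}$ and $F_n=\{V_n^2> V_0^2\}$ are complementary, so the indicator sequence $a_n(\omega)=\mathbf{1}_{D_n}(\omega)$ is a genuine $\{0,1\}$-valued sequence; and by the definition given immediately above the theorem, the event ``$V^T$ enters $D$ infinitely often'' is $\{F_{n-1}\cap D_n \ \mathrm{i.o.}\}$, while ``$V^T$ enters $F$ infinitely often'' is $\{D_{n-1}\cap F_n\ \mathrm{i.o.}\}$ (with the convention $D_0$ a.s.).

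The heart of the argument is the following elementary claim, which I would prove pointwise: on the event $A=\{D_n\ \mathrm{i.o.}\}\cap\{F_n\ \mathrm{i.o.}\}$ there are infinitely many $n$ with $V_{n-1}\in F,\ V_n\in D$, and infinitely many $n$ with $V_{n-1}\in D,\ V_n\in F$. Fix $\omega\in A$. Then both $\{n:a_n(\omega)=1\}$ and $\{n:a_n(\omega)=0\}$ are infinite. If there were only finitely many indices with $a_{n-1}=0$ and $a_n=1$, then for all $n$ beyond some $N$ the implication $a_{n-1}=0\Rightarrow a_n=0$ would hold; choosing, by infinitude, an index $m>N$ with $a_m=0$ would force $a_n=0$ for every $n\ge m$, contradicting the infinitude of $\{n:a_n=1\}$. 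Hence the sequence has infinitely many $0\to1$ transitions, i.e. infinitely many $n$ with $V_{n-1}\in F$ and $V_n\in D$; the symmetric statement about $1\to0$ transitions follows in exactly the same way. Since the recurrence theorem gives $P(\{D_n\ \mathrm{i.o.}\})=P(\{F_n\ \mathrm{i.o.}\})=1$, we get $P(A)=1$, and therefore $P(F_{n-1}\cap D_n\ \mathrm{i.o.})=P(D_{n-1}\cap F_n\ \mathrm{i.o.})=1$.

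Finally I would close the loop with the time velocity process: because the interarrival times $U_k-U_{k-1}$ are i.i.d.\ exponential with rate $\lambda>0$, each jump time is finite a.s.\ while $U_k\to\infty$, so $N(t)\to\infty$ a.s.; consequently every collision index is attained at a finite jump time and the $F\to D$ and $D\to F$ transitions produced above are realized by $V^T$ at those times, which is precisely what it means for $V^T$ to enter $F$ and $D$ infinitely often. I do not anticipate a real obstacle here — the only points needing a moment's care are verifying the complementarity of $D_n$ and $F_n$ so that the binary-sequence lemma applies verbatim, and the routine observation that $N(t)\to\infty$ a.s., which is what licenses the passage from the collision-indexed statement to the genuinely time-parametrized one. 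If anything is delicate, it is only resisting the urge to overcomplicate what is at bottom a counting argument.
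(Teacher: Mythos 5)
Your proof is correct and follows the same route as the paper: reduce the statement to the recurrence theorem for $D_n$ and $F_n$ via the definition $\left\{V^T \mbox{ enters } D \mbox{ } i.o.\right\} = \left\{V_{n-1} \in F, V_n \in D \mbox{ } i.o.\right\}$. The only difference is that the paper declares this step to follow ``immediately'' from the recurrence theorem, whereas you explicitly supply the small binary-sequence argument (a $\{0,1\}$-sequence taking both values infinitely often must switch infinitely often in both directions) together with the routine observation that $N(t)\rightarrow\infty$ a.s., which is a welcome filling-in of detail rather than a departure in method.
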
  

While all molecules in $P$ start in the $D$ state, they each collide with molecules from $Q$ at different times, and therefore the oscillations between $F$ and $D$ eventually fall out of phase with one another (Figure 4). There is a net destructive interference between the oscillations, and therefore they are not seen if we study the average kinetic energy of the molecules of $P$ directly. In this sense, the thermal equilibration of a gas and transverse relaxation of spins in magnetic resonance spectroscopy can be understood in a similar way. 

\subsection{Frequency of oscillations between $F$ and $D$}

The next two sections will characterise the oscillations by studying their frequency and period. We will restrict attention to the expected value of these quantities. Define the \emph{collision crossing number},

\[
C_n = \left|\left\{\mbox{$1 \leq m \leq n$ such that ($V_{m-1} \in D$, $V_{m} \in F$) or ($V_{m-1} \in F$, $V_{m} \in D)$}\right\}\right|.
\]

\noindent
$C_n$ is the number of collisions out of $n$ collisions that cause $P_0$ to cross the boundary between $F$ or $D$. The random variable

\[
W_n = C_n/n
\]

\noindent
measures the frequency at which $P_0$ enters and exits $F$ as a function of the number of collisions. $W_n$ is called the \emph{collision crossing frequency}.

\begin{mythe}
$E(W_1), E(W_2), \ldots$ is a strictly and monotoncially increasing sequence and $E(W_n) \rightarrow \alpha$, where $\alpha \leq 1$ is a positive constant.
\end{mythe}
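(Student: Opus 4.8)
The plan is to turn both assertions into statements about the per-collision crossing probabilities. Let $B_m$ be the indicator of $\{(V_{m-1}\in D,\,V_m\in F)\text{ or }(V_{m-1}\in F,\,V_m\in D)\}$, so that $C_n=\sum_{m=1}^n B_m$ and, by linearity of expectation, $E(W_n)=\tfrac1n\sum_{m=1}^n p_m$ with $p_m:=P(B_m=1)$. A one-line manipulation gives
\[
E(W_{n+1})-E(W_n)=\frac{p_{n+1}-E(W_n)}{n+1},
\]
so strict monotonicity of $\{E(W_n)\}$ is \emph{equivalent} to $p_{n+1}>E(W_n)$ for every $n\ge1$; it follows in particular if the sequence $\{p_m\}$ is itself strictly increasing, since then the running average of $p_1,\dots,p_n$ lies strictly below $p_{n+1}$. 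Likewise, if $p_m\to\alpha$ then $E(W_n)\to\alpha$ by Cesàro averaging, and the bound $0<\alpha\le1$ transfers from the corresponding bound on $\lim_m p_m$. So it remains to study the limit and the monotonicity of $\{p_m\}$.

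For the limit I would condition on $V_0=v_0$. By (2.2) the pair $(V_{m-1},V_m)$ is then jointly Gaussian, with mean $(c^{m-1}v_0,c^m v_0)$ and, using $V_m=cV_{m-1}+X_m$ together with (2.6), covariance entries $\mathrm{var}\,V_{m-1}=s_{m-1}^2$, $\mathrm{var}\,V_m=s_m^2$, $\mathrm{cov}=c\,s_{m-1}^2$, where $s_k^2=\sigma_x^2(1-c^{2k})/(1-c^2)$. Since $m_p\ge m_q$ forces $c\in[0,1)$, the mean tends to $0$ and the covariance to the nondegenerate matrix $\Sigma_\infty=\sigma_\infty^2\left(\begin{smallmatrix}1&c\\c&1\end{smallmatrix}\right)$ with $\sigma_\infty^2=\sigma_x^2/(1-c^2)$, in accordance with Theorem 2.1. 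Writing $a:=|v_0|$, the event $B_m$ is $\{(V_{m-1},V_m)\in R_a\}$ with $R_a:=\{(\xi,\eta):|\xi|\le a<|\eta|\ \text{or}\ |\eta|\le a<|\xi|\}$, whose boundary is a finite union of lines and hence $\Sigma_\infty$-null; the portmanteau theorem then gives $p_m(v_0):=P(B_m\mid V_0=v_0)\to\alpha(v_0):=\mathcal N_2(0,\Sigma_\infty)(R_a)$. For $v_0\ne0$, both $R_a$ and its complement contain a nonempty open set (neighbourhoods of $(0,2a)$ and of the origin), so $0<\alpha(v_0)<1$; since $P(V_0=0)=0$, dominated convergence over $V_0\sim\mathcal N(0,\sigma_0^2)$ yields $p_m\to\alpha:=E(\alpha(V_0))\in(0,1)$, and hence $E(W_n)\to\alpha$ with $0<\alpha\le1$.

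The strict monotonicity is the genuine obstacle, and some care is needed: it is \emph{not} true $v_0$ by $v_0$. For example, when $|v_0|$ is small the particle starts in a tiny window about the origin, so $p_1(v_0)\to1$ as $v_0\to0$ (one collision almost surely ejects it) while $p_2(v_0)\to0$; thus the inequality $p_{n+1}>E(W_n)$ has to be extracted from the $v_0$-\emph{average} $p_m=E(p_m(V_0))$, not from its sections. My plan is to write $p_m(v_0)=E\big(h_m(V_{m-1})\mid V_0=v_0\big)$ with $h_m(w)$ the explicit normal-c.d.f.\ expression for the crossing probability given $V_{m-1}=w$, compare the laws $\mathcal N(c^{m-1}v_0,s_{m-1}^2)$ and $\mathcal N(c^{m}v_0,s_{m}^2)$ of $V_{m-1}$ and $V_m$ (mean contracting, variance strictly increasing toward $\sigma_\infty^2$), interpolate along a smooth path $t\mapsto(\mu(t),v(t))$ and differentiate $\int h_m(w)\,g_{\mu(t),v(t)}(w)\,dw$ — the Gaussian density being smooth even though $h_m$ jumps at $\pm a$ — to track the sign, and then integrate the outcome against $V_0$. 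I expect this to close only in the regime in which $P$ is genuinely hotter than $Q$ at time $0$, i.e.\ $\sigma_0^2$ large relative to $\sigma_x^2$, so that a typical $|V_0|$ lies well out in the tail of the equilibrium law and a partially relaxed velocity is strictly more likely to cross the threshold than the velocity right after the first collision; identifying the precise hypothesis on $\sigma_0,\sigma_x,c$ that makes $\{p_m\}$ — equivalently $\{E(W_n)\}$ — strictly increasing is itself part of the task. Once that inequality is in hand, the two reductions of the first paragraph complete the proof.
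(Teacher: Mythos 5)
Your treatment of the second assertion is sound and genuinely different from the paper's: writing $E(W_n)=\tfrac1n\sum_{m=1}^n p_m$ with $p_m=P(B_m=1)$, proving $p_m\to\alpha$ by Gaussian convergence of $(V_{m-1},V_m)$ conditional on $V_0$ plus portmanteau and dominated convergence, and concluding by Ces\`aro, even identifies $\alpha$ as an explicit averaged Gaussian mass $E\big(\mathcal N_2(0,\Sigma_\infty)(R_{|V_0|})\big)$, which the paper does not do (the paper gets convergence for free from monotonicity, boundedness $W_n\le 1$ and the completeness axiom). But the first assertion — strict monotonicity of $E(W_n)$ — is the heart of the theorem, and your proposal does not prove it. You correctly reduce it to $p_{n+1}>E(W_n)$ for all $n$, propose to obtain this from strict monotonicity of $\{p_m\}$, rightly observe that the conditional sequence $p_m(v_0)$ is \emph{not} monotone (small $|v_0|$ gives $p_1(v_0)\approx 1$, $p_2(v_0)\approx 0$), and then leave the averaged inequality as an uncompleted plan (interpolating the Gaussian parameters of the law of $V_{m-1}$ and differentiating), which you yourself expect to close only under an added hypothesis $\sigma_0^2\gg\sigma_x^2$ that is not in the theorem statement. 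As written, nothing in the proposal establishes $E(W_n)<E(W_{n+1})$, and the suggestion that extra conditions on $\sigma_0,\sigma_x,c$ may be needed amounts to proving a weaker statement than the one claimed. That is a genuine gap.

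For comparison, the paper's argument for monotonicity never passes through the per-collision probabilities $p_m$ at all: it works with the distribution function of the crossing count $C_n$, decomposing $\{C_{n+1}>b_i^n\}=\{C_n>b_i^n\}\cup\big(\{C_n=b_i^n\}\cap H_{n+1}\big)$ into disjoint events, arguing that $P(\{C_n=b_i^n\}\cap H_{n+1})>0$ because $X_{n+1}$ is independent of $(V_0,V_n)$ so a crossing at the $(n+1)$th collision always retains positive probability on $\{C_n=b_i^n\}$, hence $P(C_n>b_i^n)<P(C_{n+1}>b_i^n)$ at every level, and then converting these level-wise strict inequalities into $E(W_n)<E(W_{n+1})$ via the tail-sum representation of the expectation together with a rescaling of the grid $\mathcal A_n$ into $\mathcal A_{n+1}$. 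If you wish to salvage your route, the missing ingredient is precisely an unconditional proof of $p_{n+1}>\tfrac1n\sum_{m\le n}p_m$ for every $n$ (with only the standing assumptions on $\sigma_0$, $\sigma_x$ and $c$); your own observation about small $|v_0|$ shows a pointwise-in-$v_0$ argument cannot deliver it, so the averaging over $V_0$ must do real work, and that step is currently absent.
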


\begin{proof}
The second part of the theorem follows from the first \textit{via} the completeness axiom and the fact that $W_n \leq 1$ \textit{a.s}. To prove the first part, define the family $\mathcal{A}_1, \mathcal{A}_2, \ldots$, where

\[
\mathcal{A}_n = \{0, 1/n, 2/n, \ldots, 1\}.
\] 

\noindent
$\mathcal{A}_n$ is the state space of $W_n$. Let $a_i^n$ denote be $i$th element of $\mathcal{A}_n$ (i.e., $a_k^n = (k-1)/n$ for $k = 1, 2, \ldots, n+1$) and $b_i^n = n a_i^n$. Now, fix an arbitrary $n$ and arbitrary $a_i^n \in \mathcal{A}_n$. Let

\[
H_{n+1} = \left\{V_{n+1} \in F \cap V_{n} \in D\right\} \cup \left\{V_{n+1} \in D \cap V_{n} \in F\right\}. 
\]

\noindent
If $H_{n+1}$ occurs then the collision velocity process crosses the boundary between $F$ and $D$ at the $(n+1)$th collision. Using this event, the event $\{C_{n+1} > b_i^n\}$ can be decomposed as follows:

\[
\left\{C_{n+1} > b_i^n\right\} = \left\{C_n > b_i^n\right\} \cup \left(\left\{C_n = b_i^n\right\} \cap H_{n+1}\right).
\]
 
\noindent
Because $\{C_{n} > b_i^n\}$ and $(\{C_n = b_i^n\} \cap H_{n+1})$ are mutually exclusive, we can then write

\begin{equation}
P\left(C_{n+1} > b_i^n\right) = P\left(C_n > b_i^n\right) + P\left(\left\{C_n = b_i^n\right\} \cap H_{n+1}\right).
\end{equation}

\noindent
To form a strict inequality from this equation, we will show that $P(\{C_n = b_i^n\} \cap H_{n+1}) > 0$. Without loss of generality, suppose that $b_i^n$ is odd. Because $V_0 \in D$ \textit{a.s}, this means that if $\{C_n = b_i^n\}$ then $V_n \in F$ with probability 1. And so 

\[
P\left(\left\{C_n = b_i^n\right\} \cap H_{n+1}\right) = P\left(\left\{C_n = b_i^n\right\} \cap \left\{V_{n+1} \in D \right\} \right).
\]

\noindent
Let us suppose that $P(\{C_n = b_i^n\} \cap \{V_{n+1} \in D \}) = 0$. According to de Morgan's law, this means that

\[
P\left(\left\{C_n = b_i^n\right\}^c \cup \left\{V_{n+1} \in F \right\} \right) = 1,
\]

\noindent
where $A^c$ is the compliment of event $A$. For any two events $A_1, A_2 \in \Omega$, we have the elementary relationship $P(A_1 \cup A_2) = P(A_1) + P(A_2) - P(A_2 \cap A_1)$. If we set $P(A_1) = \epsilon_1$ and $P(A_2 \cap A_1) = \epsilon_2$, then this relationship can be written as $P(A_2) = P(A_1 \cup A_2) - (\epsilon_1 - \epsilon_2)$. Applying this to the above, with $A_1 = \{C_n = b_i^n\}^c$ and $A_2 = \{V_{n+1} \in D \}$, we have that 

\[
P\left(V_{n+1} \in D \right) = 1 - \left(\epsilon_1 - \epsilon_2\right).
\]

\noindent
According to this and (2.1), with probability $1 - (\epsilon_1 - \epsilon_2)$ the inequality

\[
V_0^2 - \left(cV_n + X_{n+1}\right)^2 \leq 0.
\]

\noindent
is satisfied. However, $X_{n+1}$ is independent of $V_n$ and $V_0$ and so this inequality will not hold with probability $1 - (\epsilon_1 - \epsilon_2)$ for general $\epsilon_1$ and $\epsilon_2$. So we conclude that $P(\{C_n = b_i^n\} \cap H_{n+1}) > 0$. From (3.3) we then obtain

\[
P\left(C_n > b_i^n\right) < P\left(C_{n+1} > b_i^n\right).
\]

\noindent
In other words,

\[
P\left(W_{n} > a_i^n\right) < P\left(W_{n+1} > a_i^n n/(n+1) \right).
\]

\noindent
To complete the proof, note that $a_{i-1}^{n+1} < a_{i}^n n/(n+1)$. This means that $P(W_{n+1} > a_i^n n/(n+1)) \leq P(W_{n+1} > a_{i-1}^{n+1})$. Thus, summing both sides of the above over $\mathcal{A}_n$ and using the fact that the cardinality of $\mathcal{A}_{n+1}$ is greater than the cardinality of $\mathcal{A}_{n}$, we find that

\begin{eqnarray*}
E(W_n) &<& \sum_{a_i^n \in \mathcal{A}_n} P\left(W_{n+1} > a_i^n n/(n+1)\right) \\
			 &\leq& \sum_{a_{i-1}^{n+1} \in \mathcal{A}_{n+1}} P\left(W_{n+1} > a_{i-1}^{n+1} \right).
\end{eqnarray*}

\noindent
The sum on the right hand side goes over all elements of $\mathcal{A}_{n+1}$ expect for 1. However, $W_{n+1} \leq 1$ \textit{a.s}. Therefore,

\[
\sum_{a_{i-1}^{n+1} \in \mathcal{A}_{n+1}} P\left(W_{n+1} > a_{i-1}^{n+1} \right) = \sum_{a_i^{n+1} \in \mathcal{A}_{n+1}} P\left(W_{n+1} > a_i^{n+1}\right), 
\]

\noindent
and hence we obtain $E(W_n) < E(W_{n+1})$. Applying this result to the cases $n=1, 2, \ldots$ in order, we can complete the proof by induction.
\end{proof}

This result can also be extended to the time velocity process in a straightforward way. Define the \emph{time crossing frequency} at time $t$,

\[
W(t) = C_{N(t)}/N(t)
\]

\noindent
$W(t)$ is the fraction of collisions that have caused the particle to enter or exit the state $F$ (or $D$) out of the total number of collisions that have occurred by time $t$. 

\begin{mythe}
Let $t_1, t_2, \ldots$ be any sequence of times such that $0 \leq t_1 < t_2 < \cdots$. Then $E(W(t_1)), E(W(t_2)), \ldots$ is a strictly monotonically increasing sequence and $E(W(t_n)) \rightarrow \alpha$, where $\alpha \leq 1$ is a positive constant.
\end{mythe}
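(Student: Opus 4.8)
The plan is to reduce everything to Theorem 3.5 by conditioning on the Poisson process $N$, which by construction is independent of $V_0, X_1, X_2, \ldots$. On the event $\{N(t)=n\}$ the random variable $W(t)=C_{N(t)}/N(t)$ equals $W_n=C_n/n$, and $W_n$ is $\sigma(V_0,X_1,\ldots,X_n)$-measurable, hence independent of $N(t)$. Writing $\phi(n):=E(W_n)$ for $n\ge 1$ and $\phi(0):=0$ (with the convention $W(t):=0$ on $\{N(t)=0\}$, consistent with $\phi(0)=0$), the rule of total probability gives
\[
E(W(t)) \;=\; \sum_{n=0}^{\infty} E\bigl(W_n \mid N(t)=n\bigr)\,P(N(t)=n) \;=\; \sum_{n=0}^{\infty} \phi(n)\,P(N(t)=n) \;=\; E\bigl[\phi(N(t))\bigr].
\]
By Theorem 3.5 the function $\phi$ satisfies $0\le\phi\le 1$, $\phi(1)<\phi(2)<\cdots$, and $\phi(n)\uparrow\alpha$ as $n\to\infty$, where $0<\alpha\le 1$.

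For the monotonicity, fix $0\le s<t$. Since the Poisson process $N$ is non-decreasing, $N(s)\le N(t)$ everywhere on $\Omega$, and since $\phi$ is non-decreasing this gives $\phi(N(s))\le\phi(N(t))$ everywhere, hence $E(W(s))\le E(W(t))$. To make this strict, set $B=\{N(s)\le 1\}\cap\{N(t)\ge 2\}$; then $P(B)>0$ (for instance $B\supseteq\{N(s)=0\}\cap\{N(t)-N(s)\ge 2\}$, a product of positive-probability events by the exponential inter-collision law (2.3)), and on $B$ one has $\phi(N(t))-\phi(N(s))\ge \phi(2)-\phi(1)>0$ by Theorem 3.5. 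Therefore
\[
E(W(t)) - E(W(s)) \;=\; E\bigl[\phi(N(t))-\phi(N(s))\bigr] \;\ge\; \bigl(\phi(2)-\phi(1)\bigr)P(B) \;>\; 0,
\]
and applying this with $(s,t)=(t_n,t_{n+1})$ shows that $E(W(t_1)),E(W(t_2)),\ldots$ is strictly increasing.

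The same identity gives $E(W(t))=E[\phi(N(t))]\le\alpha$ for every $t$, so along the increasing sequence the values converge to some limit not exceeding $\alpha$. Identifying this limit as $\alpha$ uses $t_n\to\infty$ (implicit in the statement, since equilibration is being described): then $N(t_n)\to\infty$ almost surely, so $\phi(N(t_n))\to\alpha$ almost surely, and since $0\le\phi\le 1$ the bounded convergence theorem yields $E(W(t_n))\to\alpha$. Equivalently, splitting the mixture $\sum_n\phi(n)P(N(t)=n)$ at a large index $M$ gives $|E(W(t))-\alpha|\le P(N(t)\le M)+\sup_{n>M}|\phi(n)-\alpha|$, which is as small as desired.

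I do not expect a genuine obstacle: the content is Theorem 3.5 together with the elementary pathwise (hence stochastic) monotonicity of the Poisson process. The one point that needs care is that $W_n$ is \emph{not} monotone in $n$ along sample paths — a single collision may lower the running crossing frequency — so one cannot couple the two time-processes and compare sample paths directly; the argument must pass through the expectations, where $n\mapsto E(W_n)$ is monotone, combined with the pathwise monotonicity of $N$. A minor point is the $0/0$ ambiguity of $W(t)$ before the first collision, harmless once a convention is fixed, and the tacit requirement $t_n\to\infty$ in the convergence claim.
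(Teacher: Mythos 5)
Your proof is correct and follows essentially the same route as the paper: condition on the Poisson process to write $E(W(t))=E[\phi(N(t))]$ with $\phi(n)=E(W_n)$, then combine the strict monotonicity of $\phi$ from the collision-crossing theorem with the pathwise monotonicity of $N$, and pass to the limit with a convergence theorem. You in fact supply details the paper leaves implicit — the positive-probability event yielding the strict inequality between expectations, the convention for $W(t)$ on $\{N(t)=0\}$, and the tacit requirement $t_n\rightarrow\infty$ for the convergence to $\alpha$.
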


\begin{proof}
Under the condition $\{N(t) = n\}$, $W(t) = W_n$ \textit{a.s.}. According to Theorem 3.3 and the fact that $N(s) \leq N(t)$ for all $s < t$ \textit{a.s.},

\[
E(W(t) \mid N(t)) \geq E(W(s) \mid N(s))
\]

\noindent
\textit{a.s}. Now, if $f$ is a strictly and monotonically increasing function, then $E(f(N(t))) > E(f(N(s)))$. Applying this to the above result gives $E(E(W(t) \mid N(t))) > E(E(W(s) \mid N(s)))$. Equivalently, we have that $E(W(t)) > E(W(s))$ for all $s < t$. 

To prove the second part, note that for every $t$ $E(W(t) \mid N(t)) \geq 0$ and that $E(W(t) \mid N(t)) \rightarrow \alpha \leq 1$ according to Theorem 3.3. Using the monotone convergence theorem, we therefore have that $E(E(W(t) \mid N(t))) \rightarrow \alpha$, or that $E(W(t)) \rightarrow \alpha$.
\end{proof}

Theorem 3.5 says that on average the molecules of $P$ oscillate more and more quickly between the high- and low kinetic energy states as $P$ approaches equilibrium. We can understand this with the following analogy. Imagine that we have a hot piece of metal that we want to cool to a particular temperature. So we dip it into a bucket of cold water for a length of time $T_1$ to cool it down. However, then we find that the metal is now too cold, so we heat it up with a flame for a shorter time period $T_2$ to warm it up. However, now the metal is too hot, so then we put it back into the bucket of water, and so on. Each step adjusts the temperature of the metal, but always ends up undershooting or overshooting the target temperature. As we get closer to the target temperature, the length of time that we need to heat or cool the metal for becomes shorter and shorter. The oscillations of the molecules in $P$ between high and low energy states can be understood in the same way. Initially the molecule has too much kinetic energy, so the surroundings work to remove energy from it by collisions. Eventually, the molecule has too little kinetic energy, and so the surroundings work to provide it with more kinetic energy. However, the molecule then ends up acquiring too much kinetic energy, and so on. The analogy between thermal equilibration and transverse relaxation ends with Theorem 3.5, because in the case of transverse relaxation the average oscillation frequency is time independent [16].

\subsection{Estimate of the first hitting time to state $F$}

Computing the average period of the oscillations appears does not appear to be possible in general, particularly because the average periods become shorter with each successive oscillation (Theorem 3.5). We will instead estimate the average length of the first period and use this as a `reference value' to gauge the magitude to the other periods. The average length of the first period is the expected value of the first hitting time to state $F$. Let

\[
\tau_1 = \inf \left(t \in \mathbb{R}_+ : V(t) \in F \right).
\]

\noindent
and

\[
N_1 = \min\left(n : V_n \in F \right).
\]

\noindent
Putting these together, we have

\[
\tau_1 = \sum_{i=1}^{N_1} K_i,
\]

\noindent
where $K_1, \ldots, K_{N_1}$ are a sequence of independent exponential random variables with expectation $E(K) = 1/\lambda$.

\begin{mythe}
For $N_1$ and $\tau_1$ defined above, $E(N_1) < \infty$ and $E(\tau_1) = E(N_1)/\lambda$.
\end{mythe}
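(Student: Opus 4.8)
The plan is to treat the two claims separately: the identity $E(\tau_1)=E(N_1)/\lambda$ follows by a conditioning (Wald-type) argument, while $E(N_1)<\infty$ needs a hitting-time estimate for the collision velocity process. For the identity, note first that $N_1$ is, by its definition, a deterministic function of $V_0$ and $X_1,X_2,\ldots$ only, whereas the inter-collision times $K_1=U_1$, $K_2=U_2-U_1,\ldots$ were assumed independent of $V_0,X_1,X_2,\ldots$; hence $N_1$ is independent of the i.i.d.\ $\mathrm{Exp}(\lambda)$ sequence $K_1,K_2,\ldots$. Splitting on the value of $N_1$, interchanging sum and expectation by Tonelli's theorem, and using this independence on each term,
\[
E(\tau_1)=\sum_{n=1}^{\infty}E\!\Big(\mathbf 1\{N_1=n\}\sum_{i=1}^{n}K_i\Big)=\sum_{n=1}^{\infty}P(N_1=n)\,\frac{n}{\lambda}=\frac{E(N_1)}{\lambda},
\]
the last equality being meaningful precisely because $E(N_1)<\infty$. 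So only finiteness requires real work; that $N_1<\infty$ almost surely is already contained in the recurrence of $F$ proved above, but recurrence alone does not bound the mean.

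To bound $E(N_1)=\sum_{n\ge0}P(N_1>n)$ I would condition on $V_0$. Given $V_0=v_0$, the collision velocity process $\{V_n\}_{n\ge0}$ is the homogeneous first-order autoregression $V_n=cV_{n-1}+X_n$ started at $v_0$, and $F$ is the fixed region $\{v:v^2>v_0^2\}$. The key one-step estimate is that, for every state $v$ with $v^2\le v_0^2$,
\[
P\big(V_1\in F \mid V_0=v\big)=P\big(|cv+X_1|>|v_0|\big)\ \ge\ P\big(|X_1|>|v_0|\big)=:q(v_0)>0,
\]
since $|cv|\le|v_0|$ and $x\mapsto P(X_1>x)$ is convex on $[0,\infty)$, so that $P(X_1>|v_0|-cv)+P(X_1>|v_0|+cv)\ge 2P(X_1>|v_0|)$. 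Because this lower bound is uniform over the low-energy set and $\{V_n\}$ is Markov, iterating over successive collisions gives $P(N_1>n\mid V_0=v_0)\le(1-q(v_0))^{n}$, whence $P(N_1>n)=E\big[P(N_1>n\mid V_0)\big]\le E\big[(1-q(V_0))^{n}\big]$. Summing over $n$ (Tonelli again),
\[
E(N_1)\ \le\ E\!\left[\frac{1}{q(V_0)}\right]=\int_{\mathbb R}\frac{1}{P(|X_1|>|v|)}\cdot\frac{1}{\sqrt{2\pi}\,\sigma_0}\,e^{-v^2/(2\sigma_0^2)}\,dv .
\]

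The main obstacle is showing this last integral is finite. Since $1/P(|X_1|>|v|)$ grows like $e^{v^2/(2\sigma_x^2)}$ as $|v|\to\infty$, up to a polynomial factor (by the standard lower bound on the Gaussian tail), the Gaussian weight of the law of $V_0$ must decay fast enough to compensate, which with this crude bound forces $\sigma_0^2<\sigma_x^2$. A sharper, Kramers-type escape estimate for the autoregression — escaping $[-|v_0|,|v_0|]$ by a sustained run against the restoring drift is more likely than doing it in a single jump — would improve the bound on $h(v_0):=E(N_1\mid V_0=v_0)$ to order $e^{v_0^2/(2\sigma_\infty^2)}$, where $\sigma_\infty^2=\sigma_x^2/(1-c^2)$ is the equilibrium variance identified in Section 2, relaxing the requirement to $\sigma_0^2<\sigma_\infty^2$. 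I therefore expect $E(N_1)<\infty$ to require — and the theorem to carry, explicitly or implicitly — a smallness condition on the initial spread $\sigma_0$ relative to the equilibrium spread $\sigma_\infty$; as a consistency check, when $c=0$ one has $h(v_0)=1/P(|X_1|>|v_0|)$ exactly, so $E(N_1)=\infty$ as soon as $\sigma_0^2\ge\sigma_x^2=\sigma_\infty^2$. Granting such a hypothesis, the steps above close the argument, the remaining ingredients — the Markov property of the autoregression, the geometric-tail iteration, and Tonelli/Wald for the $K_i$ — being routine.
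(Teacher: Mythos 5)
Your first paragraph matches the paper in substance: the paper simply cites Wald's equation, and since $N_1$ is a function of $V_0,X_1,X_2,\ldots$ alone and hence independent of the exponential inter-collision times, your Tonelli computation is a legitimate (indeed more elementary) substitute. One small remark: the identity $E(\tau_1)=E(N_1)/\lambda$ holds by Tonelli whether or not $E(N_1)$ is finite (both sides may equal $+\infty$), so it does not really lean on the first claim.

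For $E(N_1)<\infty$ you take a genuinely different route from the paper, and the comparison is the interesting part. The paper asserts that it suffices to show $\lim_{m\to\infty}P(N_1>m)=0$, writes $\{N_1>m\}=\bigcap_{k=1}^m D_k$, and deduces the limit from the recurrence result $P(F_n\ i.o.)=1$ proved earlier in Section 3.1; but that argument establishes only that $N_1$ is almost surely finite, which does not imply a finite mean, so the paper's proof does not actually reach $E(N_1)<\infty$. Your quantitative route is sound as far as it goes: the one-step bound $P(V_1\in F\mid V_0=v)\ge P(|X_1|>|v_0|)$ for $v^2\le v_0^2$ is correct (the Gaussian tail is convex on $[0,\infty)$ and $|cv|\le |v_0|$), the geometric tail $P(N_1>n\mid V_0=v_0)\le (1-q(v_0))^n$ follows from the Markov property, and integrating over $V_0$ gives $E(N_1)\le E\bigl(1/q(V_0)\bigr)$. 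As you say, this only closes the argument under a condition such as $\sigma_0^2<\sigma_x^2$. More importantly, your $c=0$ check is not merely a heuristic caveat: with $m_p=m_q$ (which the paper permits and discusses explicitly after the lower-bound theorem), $N_1$ given $V_0=v_0$ is exactly geometric with success probability $P(|X_1|>|v_0|)$, so $E(N_1)=E\bigl(1/P(|X_1|>|V_0|)\bigr)=\infty$ whenever $\sigma_0\ge\sigma_x$. Thus the unconditional statement cannot be proved, and some smallness hypothesis on $\sigma_0$ relative to the equilibrium spread (your Kramers-type guess $\sigma_0^2<\sigma_x^2/(1-c^2)$ is the natural candidate) is genuinely needed. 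In short, the gap you flag is real, but it lies in the theorem and in the paper's own proof (which delivers only almost-sure finiteness via recurrence), not in your strategy; under the extra hypothesis your argument is complete and strictly more informative, since it also yields an explicit bound on $E(N_1)$ and hence on $E(\tau_1)$.
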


\begin{proof}
The second part of the theorem follows from the first part and Wald's equation [25]. To prove the first part, it is sufficient to show that

\[
\lim_{m \rightarrow \infty} P\left(N_1 > m\right) = 0.
\]

\noindent
Now, because $D_0$ occurs \textit{a.s.}, 

\[
\left\{N_1 > m\right\} = \bigcap_{k=1}^{m} D_k = \bigcap_{k=1}^{m} F_k^c,
\]

\noindent
and so, by de Morgan's law, 

\[
\left\{N_1 > m\right\} = \left\{\bigcup_{k=1}^m F_k\right\}^c.
\]

\noindent
Because $\bigcup_{k=1}^m F_k \supseteq \bigcap_{n=1}^m \bigcup_{k=n}^m F_k$, we can form the inequality

\[
\left\{N_1 > m\right\} \subseteq \left\{\bigcap_{n=1}^m \bigcup_{k=n}^m F_k\right\}^c,
\]

\noindent
And so, by monotonicity 

\[
P\left(N_1 > m\right) \leq P\left(\left\{\bigcap_{n=1}^m \bigcup_{k=n}^m F_k\right\}^c\right).
\]

\noindent
Theorem 3.2 then implies that

\[
\lim_{m \rightarrow \infty} P\left(N_1 > m\right) \leq P\left(\left\{F_n \mbox{ } i.o\right\}^c\right) = 0.
\]
\end{proof}

\begin{mythe}
For $\tau_1$ defined as above,

\begin{equation}
E(\tau_1) \geq \frac{\sigma_0^2}{\lambda\sigma_x^2}\left(1 - \left(\frac{m_p - m_q}{m_p + m_q}\right)^2\right)
\end{equation}

\end{mythe}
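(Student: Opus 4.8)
The plan is to combine the identity $E(\tau_1) = E(N_1)/\lambda$ from Theorem 3.6 with a lower bound $E(N_1) \geq \sigma_0^2(1-c^2)/\sigma_x^2$ on the expected number of collisions before the first visit to $F$, where $c = (m_p-m_q)/(m_p+m_q)$; since $1 - c^2 = 1 - ((m_p-m_q)/(m_p+m_q))^2$, this is exactly (3.5). To bound $E(N_1)$ I would write an ``energy balance'' obtained by telescoping the squared recursion. From (2.1), $V_n^2 = c^2 V_{n-1}^2 + 2cV_{n-1}X_n + X_n^2$, so $V_n^2 - V_{n-1}^2 = -(1-c^2)V_{n-1}^2 + 2cV_{n-1}X_n + X_n^2$; summing from $n=1$ to the \emph{bounded} stopping time $N_1\wedge m$ gives

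\[
V_{N_1\wedge m}^2 - V_0^2 = -(1-c^2)\sum_{n=1}^{N_1\wedge m} V_{n-1}^2 + 2c\sum_{n=1}^{N_1\wedge m} V_{n-1}X_n + \sum_{n=1}^{N_1\wedge m} X_n^2 .
\]

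Taking expectations, I would use two Wald-type identities. Since $\{N_1 > n-1\} = \{V_1,\ldots,V_{n-1}\in D\}$ is $\sigma(V_0,X_1,\ldots,X_{n-1})$-measurable while $X_n$ is independent of it with mean $0$ and variance $\sigma_x^2$, each term $E(V_{n-1}X_n\mathbf{1}_{N_1>n-1}\mathbf{1}_{n\leq m}) = 0$, so the middle sum has expectation zero, and likewise $E\sum_{n=1}^{N_1\wedge m} X_n^2 = \sigma_x^2\sum_{n=1}^m P(N_1>n-1) = \sigma_x^2 E(N_1\wedge m)$. Hence

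\[
\sigma_x^2\,E(N_1\wedge m) = \bigl(E(V_{N_1\wedge m}^2) - \sigma_0^2\bigr) + (1-c^2)\,E\!\left(\sum_{k=0}^{N_1\wedge m-1} V_k^2\right).
\]

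The bound then follows from two sign observations. First, keeping only the $k=0$ term (note $N_1\wedge m \geq 1$ because $D_0$ occurs a.s.), $E\bigl(\sum_{k=0}^{N_1\wedge m-1}V_k^2\bigr) \geq E(V_0^2) = \sigma_0^2$. Second, on $\{N_1\leq m\}$ one has $V_{N_1\wedge m}^2 = V_{N_1}^2 > V_0^2$ by the definition of $N_1$, so $E(V_{N_1\wedge m}^2) \geq E(V_0^2\mathbf{1}_{N_1\leq m}) = \sigma_0^2 - E(V_0^2\mathbf{1}_{N_1>m})$. Combining, $\sigma_x^2 E(N_1\wedge m) \geq (1-c^2)\sigma_0^2 - E(V_0^2\mathbf{1}_{N_1>m})$. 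Now I let $m\to\infty$: the left side increases to $\sigma_x^2 E(N_1)$ by monotone convergence, and $E(V_0^2\mathbf{1}_{N_1>m})\to 0$ by dominated convergence, since $V_0^2\in L^1$ and $P(N_1=\infty)=0$ — the latter being exactly the statement that $\{F_n\ i.o.\}$ has probability $1$ (Theorem 3.2), which forces $\bigcap_{n\geq1}D_n = \{N_1=\infty\}$ to be null. Thus $\sigma_x^2 E(N_1) \geq (1-c^2)\sigma_0^2$, and dividing by $\lambda$ and invoking $E(\tau_1)=E(N_1)/\lambda$ yields (3.5).

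The telescoping identity and the moment bookkeeping are routine; the step that needs care is the passage $m\to\infty$, which is why I would insist on working throughout with the truncated stopping time $N_1\wedge m$ (so that every sum is genuinely finite and the Wald identities require no integrability hypotheses) and only pass to the limit at the very end, using Theorem 3.2 to kill the boundary term $E(V_0^2\mathbf{1}_{N_1>m})$. (If $E(N_1)=\infty$ the argument still gives $\sigma_x^2 E(N_1)\geq (1-c^2)\sigma_0^2$ trivially; Theorem 3.6 guarantees it is finite.) I would also be careful about signs: the inequality is driven by the fact that at the hitting time the particle has strictly \emph{gained} energy, $V_{N_1}^2 > V_0^2$, which is what lifts $E(V_{N_1\wedge m}^2)$ back above $\sigma_0^2$ despite the mean-reversion term; evaluating the balance only at the final collision rather than summing over all of them would fail, since $E(V_{N_1-1}X_{N_1})$ is in general strictly positive — it is only the total cross term over all collisions that vanishes.
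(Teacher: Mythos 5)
Your proof is correct, and it follows the same overall plan as the paper — reduce (3.4) to the bound $\sigma_x^2 E(N_1) \geq (1-c^2)\sigma_0^2$ and then invoke $E(\tau_1)=E(N_1)/\lambda$ from Theorem 3.6 — but the way you extract that bound is genuinely different, and in fact cleaner than the published argument. The paper works directly at the stopping time: it writes $V_{N_1}=c^{N_1}V_0+S_{N_1}$ with $S_{N_1}=\sum_{k=1}^{N_1}c^{N_1-k}X_k$, uses $V_{N_1}^2>V_0^2$ together with $c^{2N_1}\leq c^2$, proves $E(S_{N_1}^2)\leq \sigma_x^2 E(N_1)$ by a truncated induction in the spirit of Wald's second equation, and disposes of the stopped cross term by arguing $E(S_{N_1})=0$ through conditioning on the events $\{V_{N_1}=V_0\}$ and $\{V_{N_1}=-V_0\}$ — a step that is problematic, since those events are null (indeed impossible, as $V_{N_1}^2>V_0^2$ by definition of $N_1$); this is exactly the kind of stopped cross term you flag as dangerous at the end of your write-up. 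You avoid it by telescoping the one-step increment $V_n^2-V_{n-1}^2=-(1-c^2)V_{n-1}^2+2cV_{n-1}X_n+X_n^2$ up to the truncated time $N_1\wedge m$, where each cross term vanishes exactly because $\{N_1>n-1\}$ is $\sigma(V_0,X_1,\ldots,X_{n-1})$-measurable and $X_n$ is independent with mean zero, then keeping only the $k=0$ term of the quadratic sum and letting $m\to\infty$ using $P(N_1=\infty)=0$. What your route buys is rigour and robustness (all sums are finite before the limit, no unverified claim about quantities evaluated at the random time, no integrability hypotheses needed for the Wald-type steps); what the paper's route buys is brevity on the page, at the cost of the unjustified $E(S_{N_1})=0$ claim. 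Both produce the same constant, your retained term $(1-c^2)E(V_0^2)$ playing the role of the paper's $\sigma_0^2-c^2\sigma_0^2$.
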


\begin{proof}
Let 

\[
S_{N_1} = \sum_{k=1}^{N_1} c^{N_1 - k} X_k,
\]

\noindent
so that we can re-write $V_{N_1}$ as 

\begin{equation}
V_{N_1} = c^{N_1} V_0 +  S_{N_1}.
\end{equation}

\noindent
By the definition of $N_1$, $V_{N_1}^2 > V_0^2$ \textit{a.s}.  Alternatively, we can write

\[
c^{2N_1} V_0^2 + ^2 S_{N_1}^2 + 2c^{N_1} S_{N_1} > V_0^2.
\] 

\noindent
Using the fact that $c^{2{N_1}} \leq c^{2}$ \textit{a.s.} and that $N_1$ and $V_0$ are independent, taking the expected value of the above gives

\begin{equation}
c^2\sigma_0^2 + ^2 E(S_{N_1}^2) + 2E(c^{N_1}S_{N_1}) > \sigma_0^2.
\end{equation}

\noindent
We need to deal with the expected values on the left hand side. First consider $E(c^{N_1} S_{N_1})$. For almost all $\omega \in \Omega$, either $\omega \in \{V_{N_1} = V_0\}$ or $\omega \in \{V_{N_1} = -V_0\}$. These events are mutually exclusive, and so we can write

\begin{eqnarray*}
E(S_{N_1}) &=& E(S_{N_1} \mid V_{N_1} - V_0 = -2V_0) P(V_{N_1} - V_0 = -2V_0) \\
           &	+& E(S_{N_1} \mid V_{N_1} - V_0 = 0) P(V_{N_1} - V_0 = 0). 
\end{eqnarray*}

\noindent
Now, from (3.5) we can form the random variable

\[
V_{N_1} - V_0 =  S_{N_1} - (1 - c^{N_1})V_0
\]

\noindent
Taking the expected value of this under the condition $\{V_{N_1} - V_0 = -2V_0\}$ gives (because $E(V_0) = 0$),

\[
E(S_{N_1} \mid V_{N_1} - V_0 = -2V_0) = 0.
\]

\noindent
Similarly, evaluating the expected value under the condition $\{V_{N_1} - V_0 = -2V_0\}$ gives $E(S_{N_1} \mid V_{N_1} - V_0 = 0) = 0$. So we have that $E(S_{N_1}) = 0$. Therefore, $E(c^{N_1}S_{N_1}) = E(E(c^{N_1}S_{N_1} \mid N_1)) = E(c^{N_1} E(S_{N_1} \mid N_1)) = 0$.

Now we will prove that $E(S_{N_1}^2) \leq E(N_1) \sigma^2$. Following the method on pg. 187 of reference [25], we can write

\[
S_{N_1 \wedge n}^2 = c^2 S_{N_1 \wedge (n-1)}^2 + \left(2c X_n S_{n-1} + X_n^2 \right) \textbf{1}(N_1 \geq n),
\]

\noindent
Because $N_1$ is a stopping time, $\textbf{1}(N_1 \geq n) = \textbf{1}(N_1 > n-1)$ depends at most upon $X_1, \ldots, X_{n-1}$. Moreover, because $S_{n-1}$ and $X_n$ are independent,

\[
E\left(S_{N_1 \wedge n}^2\right) = c^2 E\left(S_{N_1 \wedge (n-1)}^2\right) + \sigma^2 P(N_1 \geq n).
\] 

\noindent
By induction,

\begin{eqnarray*}
E\left(S_{N_1 \wedge n}^2\right) &=& \sigma_x^2 P(N_1 \geq n) + c^2 \sigma_x^2 P(N_1 \geq n-1) + \ldots + c^{2n} \sigma_x^{2} P(N \geq 0) \\
																 &\leq& \sigma_x^2 \sum_{k=1}^n P(N_1 \geq k). 
\end{eqnarray*}

\noindent
Taking the limit leads to $E(S_{N_1}^2) \leq \sigma_x^2 E(N_1)$.

Substituting these results into (3.6) gives and using the second part of Theorem 3.6 completes the proof.
\end{proof}

The above proof shows that the tightness of (3.4) is measured by how close $c$ is to 1. As discussed in section 2.3, this condition is satisfied when the particles of $P$ are very heavy, such as occurs in the Brownian motion. In fact, taking the limit $c \rightarrow 1$ gives 

\[
E(\tau_1) = 0,
\]

\noindent
in other words, a Brownian particle will almost immediately shift into the $F$ state upon being displaced from equilibrium. According to Theorem 3.4, the Brownian particle will oscillate extremely rapidly between the $F$ and $D$ states. Across the ensemble $P$ of Brownian particles, these oscillations will occur slightly out of phase with one another, and so the average kinetic energy of the particles in $P$ will quickly decay to zero. 

Let us now consider the other extreme where $m_p = m_q$. The estimate in (3.6) simplifies to $E(\tau_1) \geq \sigma_0^2/\lambda \sigma_x^2$. Moreover, elementary kinetic theory tells us that the $\sigma_0^2/\sigma_x^2 = T_p/T_q$, where $T_p$ is the initial temperature of $P$ and $T_q$ the temperature of $Q$, respectively. And so we have the inequality

\[
E(\tau_1) \geq \frac{T_p}{\lambda T_q}.
\]

\noindent
The parameter $\lambda$ will have a dependence upon $T_p$ and $T_q$ as well, and so the relationship between frequency and temperature is not as simple as the above suggests. If $T_p$ is much greater than $T_q$, then the molecules of $P$ will initially be traveling much faster than those of $Q$ and so the molecules of $Q$ will appear stationary from the point-of-view of a molecule in $P$. Elementary kinetic theory then says that $\lambda \propto 1/T_p$, and so

\[
E(\tau_1) \geq a\frac{T_p^2}{T_q}
\]

\noindent
where $a > 0$ is a constant independent of $T_p$ and $T_q$. This lower bound is very large, and so displacing $P$ far from equilibrium will generate oscillations with a small oscillation frequency. The experimental prediction of this work is that if $P$ contains very few molecules (so that destructive interference between the individual oscillations is minimised), then the individual oscillations between $F$ and $D$ may be in phase long enough to have a noticible affect on the data. In particular, we would expect that thermal equilibration would take a longer length of time with a smaller ensemble size for $P$. This is analogous to how ensembles of a very small number of spins stay in phase with each other over a relatively long time period during transverse relaxation and phase decoherence measurements [27, 28]. The effect of ensemble size on the rate of thermal equilibration might be an interesting topic for a future study.

\section{Final remarks}

Through a series of mathematical arguments we have shown that thermal equilibration of a high temperature component of a two-component gas can be interpreted in a similar way to transverse relaxation in magnetic resonance spectroscopy. We did this by dividing the kinetic energy state space of the molecules of the high temperature component into a `high energy' and a `low energy' component, and by then showing that the molecules shift (`oscillate') between the two components forever over time. Because each molecule in the gas enters and exits these states at a different time, the `oscillations' occur out of phase and interfere destructively with one another, and so are not seen when monitoring the equilibration process \textit{via} the average kinetic energy of the gas in the laboratory. We also suggested that the oscillations may stay in-phase long enough to have an observable effect on the thermal equilibration process if the high temperature component contained very few molecules. We also characterised the oscillation by studying its frequency, and showed that the frequency strictly increases with time. This is where transverse relaxation and thermal equilibration differ, because in the former case the average oscillation frequencies are usually constant in time [16]. The stochastic model that we used is not standard in the statistical physics literature, however we demonstrated that it is consistent with both the two key models of thermal equilibration (Boltzmann equation and Brownian motion) and so there is no reason to doubt that the model is physically meaningful.

The transverse relaxation-like `oscillatory' interpretation of the thermal equilibration presented here offers a new and intuitive way of looking at the thermal equilibration process. It may be a useful starting point for creating models of the thermal equilibration process, and also suggests that there might be other connections between thermal equilibration and other seemingly disparate non-equilibrium phenomena in statistical physics. \\

\noindent
\textbf{Acknowledgements}
DMP is supported by a Japan Society for the Promotion of Science Postdoctoral Fellowship. Prof. Yoshitaka Tanimura is thanked for useful comments.\\

\noindent
\textbf{References} \\

\noindent
[1] Perthame B 2004 Mathematical tools for kinetic equations \textit{Bull. Amer. Math. Soc.} \textbf{41} 205 - 244.\\

\noindent
[2] Alexandre R 2009 A reivew of the Boltzmann equation with singular kernels \textit{Kinetic and Related Models} \textbf{2} 551 - 645 \\

\noindent
[3] Mouhot C 2007 Quantitative linearized study of the Boltzmann collision operator and applications \textit{Comm. Math. Sci.} \textbf{5} 73-86\\

\noindent
[4] Bassetti F, Ladelli L and Toscani G 2011 Kinetic models with randomly perturbed binary collisions \textit{J. Stat. Phys.} \textbf{142} 686 - 709\\

\noindent
[5] Desvillettes L and Villani C 2002 On a variant of Korn's inequality arising in statistical mechanics \textit{ESAIM: Cont. Optim. Calc. Var.} \textbf{8} 603 - 619\\

\noindent
[6] Desvillettes L and Villani C 2005 On the trend to global equilibrium for spatially inhomogeneous kinetic systems: The Boltzmann equation \textit{Invent. Math.} \textbf{159} 245 - 316\\

\noindent
[7] Carlen E A, Carvalho M C and Lu X 2009 On the strong convergence to equilibrium for the Boltzmann equation with soft potentials \textit{J. Stat. Phys.} \textbf{135}, 681 - 736\\

\noindent
[8] Gardiner C 2009 \textit{Stochastic Methods: A Handbook for the Natural and Social Sciences} (Berlin: Springer)\\

\noindent
[9] Oksendal B 2007 \textit{Stochastic Differential Equations: An Introduction with Applications} (Berlin: Springer)\\

\noindent
[10] Dolgopyat D and Koralov L 2009 Motion in a random force field \textit{Nonlinearity} \textbf{22} 187 - 211\\

\noindent
[11] Aguer D, De Bievre S, Lafitte P and Parris P E 2010 Classical motion in force fields with short range correlations \textit{J. Stat. Phys.} \textbf{138} 780-814\\

\noindent
[12] Kupferman R 2004 Fractional kinetics in Kac-Zwanzig heat bath models \textit{J. Stat. Phys.} \textbf{114} 291 - 326\\

\noindent
[13] Packwood D M 2010 The Ornstein-Uhlenbeck equation as a limiting case of a successive interactions model \textit{J. Phys. A: Math. Theor.} \textbf{43} 465001 - 465014\\

\noindent
[14] Attal S and Pautrat Y 2006 From repeated to continuous quantum interactions \textit{Ann. Henri Pioncare} \textbf{7} 59 - 104\\

\noindent
[15] Pellegrini C and Petruccione F 2009 Non-Markovian quantum repeated interactions and measurements \textit{J. Phys. A: Math. Theor.} \textbf{42} 425304 - 425325\\

\noindent
[16] Kubo R 1969 Stochastic theory of line shape \textit{Adv. Chem. Phys.} \textbf{15} 101\\

\noindent
[17] Breuer H-P and Petruccione F 2002 \textit{The Theory of Open Quantum Systems} (New York: Oxford University Press)\\

\noindent
[18] Tyryshkin A M, Morton J J L, Benjamin S C, Ardavan A, Briggs G A D, Ager J W and Lyon S A 2006 Coherence of spin qubits in silicon \textit{J. Phys.: Condens. Matter} \textbf{18} S783 - S794\\

\noindent
[19] Endo T 1987 Quantum theory of phase relaxation \textit{J. Phys. Soc. Japan} \textbf{56} 1684 - 1692 \\

\noindent
[20] Ban M, Shibata F and Kitajima S 2007 On phase relaxation processes \textit{J. Mod. Opt.} \textbf{54} 555 - 564\\

\noindent
[21] Kitajima S, Ban M and Shibata F 2010 Theory of decoherence control in a fluctuating environment \textit{J. Phys. B: At. Mol. Opt. Phys.} \textbf{43} 135504 - 135518\\

\noindent
[22] Packwood D M and Tanimura Y 2011 Non-Gaussian stochastic dynamics of spins and oscillators: A continuous-time random walk approach \textit{Phys. Rev. E} \textbf{84} 61111 - 61124\\

\noindent
[23] Packwood D M and Tanimura Y 2012 Phase decoherence by a continuous-time random walk process \textit{In preparation}. Preprint: arXiv:1205.0296v1 \\

\noindent
[24] Packwood D M 2012 Relaxation function for a continuous time random walk \textit{Hokkaido University Technical Report Series in Mathematics} \textbf{151} 168 - 171 \\

\noindent
[25] Durrett R 2010 \textit{Probability: Theory and Examples} (New York: Cambridge University Press)\\

\noindent
[26] Zwanzig R 2001 \textit{Nonequilibrium Statistical Mechanics} (New York: Oxford University Press)\\

\noindent
[27] Krojanski H G and Suter D 2004 Scaling of decoherence in wide NMR quantum registers \textit{Phys. Rev. Lett.} \textbf{93}, 90501 - 90505 \\

\noindent
[28] Stoneham M 2009 Is a room-temperature, solid-state quantum computer a mere fantasy? \textit{Physics} \textbf{2}, 34 - 40\\

\newpage
\noindent
\textbf{Figures} \\

\begin{figure}[h]
	\centering
		\includegraphics[width=1.00\textwidth]{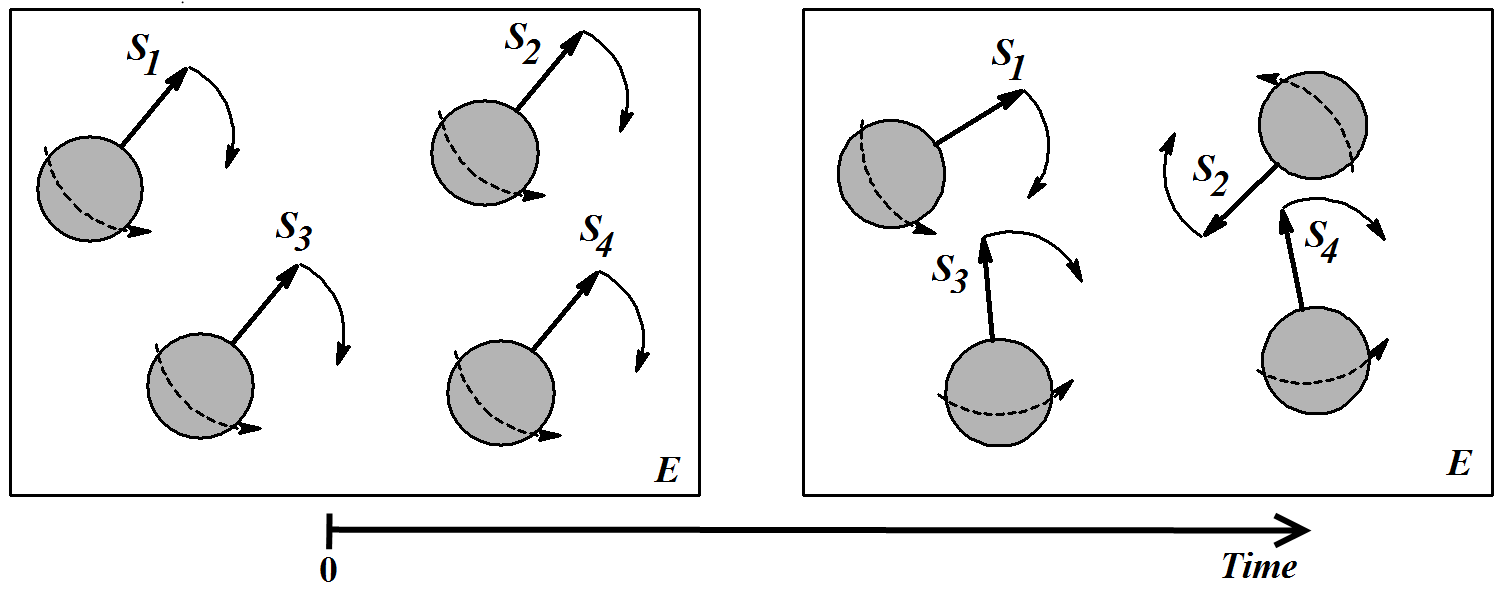}
	\caption{A simplified diagram of transverse relaxation in a magnetic resonance experiment. The diagrams show four spin particles contained in an environment $E$, drawn in the $xy$-plane. The spins are a long distance apart and so do not interact with each other, however they do interact with the degrees of freedom of $E$. The spin of the particles is represented by the particles rotating about their spin angular momentum vectors $S_i$ (this is indicated by the dotted arrows about each particle). At time zero an electromagnetic pulse is applied to the $xy$ plane. This places the spin angular momentum vectors in the $xy$-plane and aligns them in the same direction. The vectors rotate clockwise in plane, initially in unison (left hand diagram; this rotation represented by the solid curved arrows). However, random interactions between each spin and degrees of freedom of $E$ modulate the angular frequency of each rotation. These interactions differ from spin to spin, and so eventually the the rotating vectors fall out of alignment with each other and end up pointing in a variety of directions (right hand diagram).}
	\label{fig:Figure1}
\end{figure}

\begin{figure}[h]
	\centering
		\includegraphics[width=1.00\textwidth]{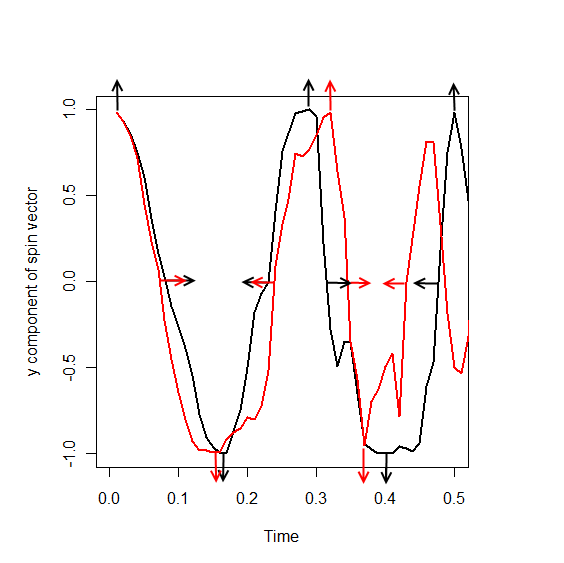}
	\caption{A simulation of $y$ component of the spin vector during a magnetic resonance experiment on two spins. One spin is represented by the black line and the other by the red line. The direction of the spin vector in the $xy$ plane at various points along the curves is indicated by the arrows. The curves are initially coincident, however they fall out of step with one another as random modulations of the rotational frequencies take place. The curves were simulated with the Kubo oscillator model with Wiener process frequency modulation and a mean frequency of $20$ (arbitrary units). See [16].}
	\label{fig:Figure2}
\end{figure}

\begin{figure}[h]
	\centering
		\includegraphics[width=1.00\textwidth]{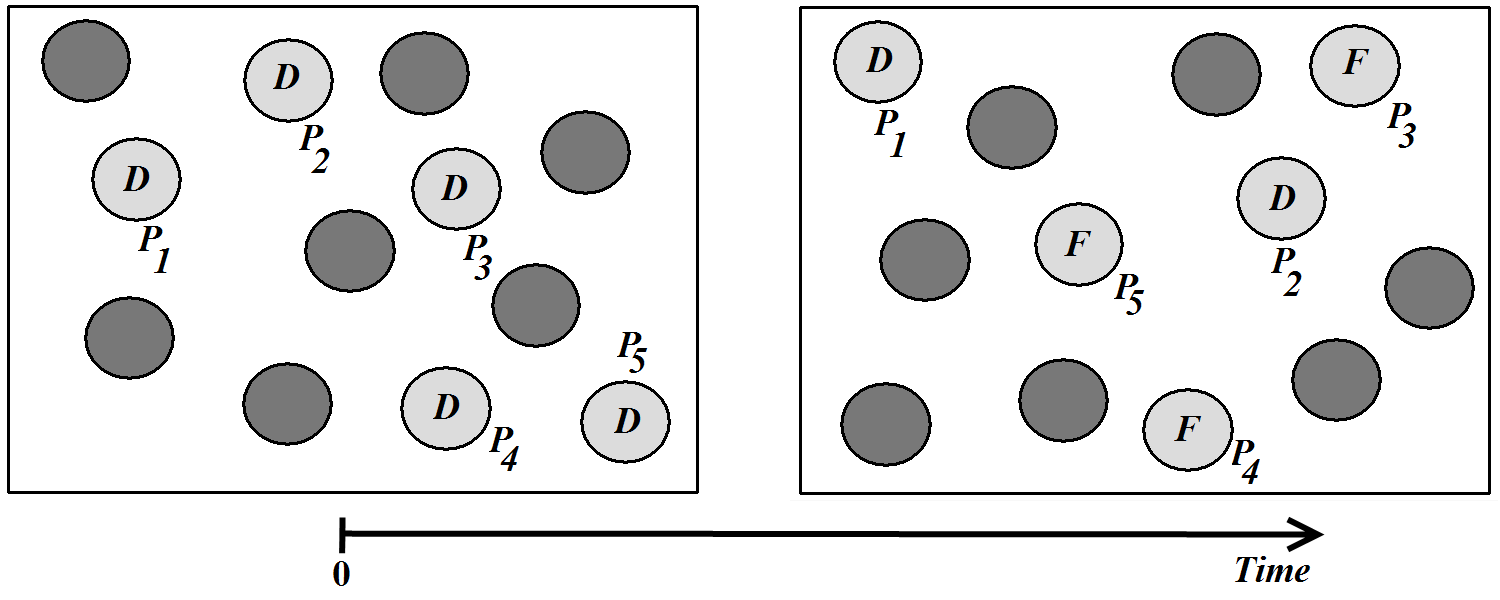}
	\caption{Thermal equilibration of a two dimensional two component gas. The dark grey particles represent the $Q$ component and the light grey particles represent the $P$ component. The kinetic energy space of the particles of $P$ is divided into two regions, $D$ and $F$. At time $0$ the particles of $P$ are instantaneously heated by an external source and are all placed in the $D$ state (left hand figure). However, random collisions with the particles of $Q$ cause the particles of $P$ to shift between the $D$ and $F$ states. Because the particles in $P$ each collide with the particles of $Q$ at different times, the particles in $P$ end up in a variety of different states at equilibrium (right hand diagram). The states $D$ and $F$ will be defined in Section 3 of the main paper.}
	\label{fig:Figure3}
\end{figure}

\begin{figure}[h]
	\centering
		\includegraphics[width=1.00\textwidth]{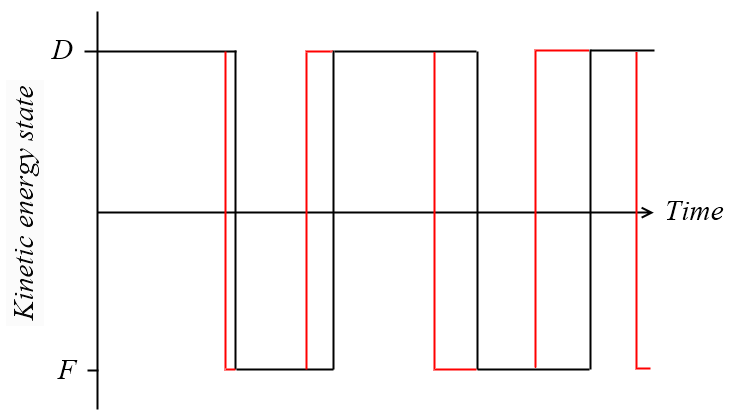}
	\caption{Sketch of the kinetic energy state of two molecules from $P$ as a function of time. The two molecules start out in the same kinetic energy state. However, because the two molecules collide with molecules from $P$ at different times, the oscillations between the two states falls out of step with each other over time. This switching between the $F$ and $D$ states is referred to as an `oscillation' in this work.}
	\label{fig:Figure4}
\end{figure}

\end{document}